\newsavebox\myboxA
\newsavebox\myboxB
\newlength\mylenA
\newcommand*\xoverline[2][0.75]{%
	\sbox{\myboxA}{$\m@th#2$}%
	\setbox\myboxB\null
	\ht\myboxB=\ht\myboxA%
	\dp\myboxB=\dp\myboxA%
	\wd\myboxB=#1\wd\myboxA
	\sbox\myboxB{$\m@th\overline{\copy\myboxB}$}
	\setlength\mylenA{\the\wd\myboxA}
	\addtolength\mylenA{-\the\wd\myboxB}%
	\ifdim\wd\myboxB<\wd\myboxA%
	\rlap{\hskip 0.5\mylenA\usebox\myboxB}{\usebox\myboxA}%
	\else
	\hskip -0.5\mylenA\rlap{\usebox\myboxA}{\hskip 0.5\mylenA\usebox\myboxB}%
	\fi}
\newcommand{\removelatexerror}{\let\@latex@error\@gobble}
\newtheorem{claim}{Claim}
\newtheorem{rule1}{Rule}
\newtheorem{lemma}{Lemma}
\newtheorem{Definition}{Definition}
\begin{document}

\title{A $1$-approximation algorithm for energy-efficient TDM-PON guaranteeing SLA of up-stream and down-stream traffic}

\author{Sourav~Dutta, Dibbendu~Roy,~and~Goutam~Das
\thanks{Sourav Dutta is with the Department
of Electronics and Electrical Communication Engineering, Indian institute of Technology Kharagpur,  Kharagpur,
India (e-mail: sourav.dutta.iitkgp@gmail.com).}
\thanks{Dibbendu Roy and Goutam Das are with G. S. Sanyal School of Telecommunication, Indian Institute of Technology Kharagpur, Kharagpur, India (e-mail: dibbaroy@gmail.com, gdas@gssst.iitkgp.ernet.in).} 
}
\maketitle
\date{\vspace{-5ex}}
\vspace{-2cm}
\begin{abstract}
Economical and environmental concerns necessitate research on designing energy-efficient optical access network especially Ethernet Passive Optical Network (EPON) which is one of the most widely accepted and deployed last-mile access network. In this paper, our primary focus is on designing a protocol for saving energy at Optical Network Units (ONUs) while satisfying the Service Label Agreement (SLA). The SLA of both Up-Stream (US) and Down-Stream (DS) traffic can be satisfied only if the EPON network can react to their instantaneous load change during sleep periods of ONUs and to the best of our knowledge, there doesn't exist any such proposal. Towards this target, we propose a mechanism that allows the Optical Line Terminal (OLT) to force ONUs to wake-up from sleep mode. Here, we demonstrate that if the OLT can distribute the active ONUs (transceivers are active) fairly among cycles then it provides a significant improvement in energy-efficiency. To achieve this, we formulate an ILP for fairly distributing active ONUs among cycles while satisfying the SLA of both US and DS traffic at the same time. A polynomial time $1$-approximation algorithm is proposed for solving this ILP. The convergence and the complexity analysis of the algorithm are also performed. Extensive simulations depict that fair distribution of ONUs reduces the power consumption and average delay figure at the same time and the reduction increases with an increment of the number of ONUs and round-trip time.  
\end{abstract}

\begin{IEEEkeywords}
Optical access network, Energy-efficiency EPON, approximation algorithm, convergence analysis, complexity analysis.
\end{IEEEkeywords}
\IEEEpeerreviewmaketitle
\section{Introduction}  \label{intro}
The Ethernet Passive Optical Network (EPON), being one of the most widely appreciated and deployed  access network,  energy-efficient EPON design has already become a well-established research area \cite{kani2013power}.  An EPON network architecture contains an Optical Line Terminal (OLT), multiple Optical Network Units (ONUs) and Remote nodes \cite{ipact}. Traffic from the OLT to ONUs and from ONUs to the OLT are termed as Down-stream (DS) and Up-stream (US) traffic respectively.  ONUs are responsible for consuming $70\%$ of the overall EPON power consumption \cite{internetpowerconsumption}, which promotes the research on energy-efficient ONU design. For saving energy at ONU, several low power modes (e.g. deep sleep (ds), fast sleep (fs), cyclic sleep (cs), doze mode (dz)) with different power consumption figures and sleep-to-wake-up time have been proposed \cite{internetpowerconsumption}. While saving energy by employing any of these Low Power Modes (LPMs), an ONU must ensure the Service Label Agreements (SLAs) (for example, delay bounds or packet drop probability). Thus, an efficient protocol design is required in order to choose the most suitable LPM and the duration over which the LPM (sleep duration) is employed while preserving SLA of both US and DS traffic. All these protocols are broadly classified as OLT-assisted and ONU-assisted.

In an OLT-assisted protocol, the OLT takes decision about LPMs and sleep duration which are then informed to ONUs through extra information within the GATE message while ONUs are oblivious about the decision making process.  
Several OLT-assisted protocols have been proposed in the literature \cite{oltass0,oltass1,oltass2,oltass3,oltass4,oltass5,cyclicsleep}. In Multi Point Control Protocol (MPCP), which is the standardized MAC protocol for EPON network \cite{ieee2010ieee}, ONUs inform their buffer size to the OLT through REPORT message. Once an ONU enters into sleep mode, it can neither receive the GATE message not send REPORT message.  Thus, the OLT-assisted protocols cannot react to the instantaneous change of both US and DS traffic which is extremely important for serving bursty traffic arrivals. This may lead to violation of SLAs which is a major drawback of all OLT-assisted protocols.          
In an ONU-assisted protocol, ONUs individually decide their sleep modes and sleep durations by following their own protocols and the OLT is completely unaware about this process. Several ONU-assisted protocols are also present in the literature \cite{onuass0,onuass1,doze,chayan,chayanjournal}. 
If an ONU enters into a sleep mode, it can observe its own US traffic arrivals but not DS traffic. Thus, ONU-assisted protocols are capable of reacting to the instantaneous change of only the US traffic. This may lead to violation of SLAs for DS traffic which is a major drawback of ONU-assisted protocols. 

In all existing sleep mode protocols, the sleep duration of ONUs are decided independently and hence, multiple ONUs may wake up in one polling cycle leading to bandwidth crunch. However, if these ONUs wake-up in different cycles and the entire bandwidth is distributed among the active ONUs (ONUs whose transceivers are switched on), then ONUs can up-stream a certain number of packets in shorter duration. This will reduce the time period over which an ONU is in active mode (active duration) and it provides an opportunity of improving the energy-efficiency significantly. Thus, instead of deciding sleep durations of ONUs independently, if their sleep durations are decided such that when they wake-up from sleep mode, minimum number of ONUs remain active, then energy-efficiency can be improved. This requires fair distribution of active ONUs among polling cycles.    
Thus, the major drawbacks of all existing proposals can be summarized as:
\begin{itemize}
	\item None of the existing protocols can react to the instantaneous load variation of both US and DS traffic at the same time.
	\item Sleep durations of different ONUs are decided independently and hence, the possibility of improving energy-efficiency by fairly distributing them among cycles remains unexploited. 
\end{itemize}

In this paper, we try to resolve these two issues. During sleep period, ONUs can observe the US traffic while the OLT can observe the DS traffic. Hence, while deciding sleep durations, a co-operation of the OLT and ONUs  presents an opportunity to maintain SLAs of both US and DS traffic at the same time. It is understood that ONUs can wake-up from sleep mode for plausible SLA violations in the US. However, in traditional EPON architecture, the OLT is incapable of forcing ONUs to wake-up from sleep mode when the OLT observes a plausible SLA violation in the DS. If the OLT is provisioned with the facility of forcing ONUs to wake-up from sleep mode, then the SLA of both US and DS traffic could be maintained at the same time. Further, this facility also allows the OLT to allocate a cycle to ONUs, when they will up-stream after waking-up from sleep, such that allocation is fair and the SLA of US and DS is maintained.

The discussions made above lead us to following major objectives: Firstly, the OLT needs to be provisioned with a facility of waking-up ONUs from sleep mode. Secondly, once OLT is provisioned with the wake-up facility, it has to decide which of the ONUs to be awakened in a cycle such that the active ONUs are fairly distributed among cycles while satisfying SLAs for both US and DS traffic. Lastly, the OLT has to decide the time instants when the ONUs (chosen by the previous decision process) are to be awakened from their sleep modes. To the best of our knowledge, this paper is the first proposal which aims at meeting the mentioned objectives, allowing ONUs to sleep while ensuring the SLA of both US and DS traffic at the same time. The contributions of this paper in their order of appearance are:

\begin{itemize}
	\item We present a mechanism which allows the OLT to force ONUs to wake-up from their sleep modes.
	\item To decide on which of the ONUs are to be awakened in a cycle, we formulate an Integer Linear Program (ILP) with the mentioned objectives. To solve the ILP, a polynomial time 1-approximation algorithm \cite{approx} named Fair Distribution of ONUs among Slots (\textit{FDOS}) is proposed. The convergence and complexity of this algorithm is also analyzed.
	\item It presents a scheduling protocol for deciding the time instants when the OLT forces the ONUs to wake-up from sleep modes. We have named this as \textit{wake-up scheduling}.
	\item  Through extensive simulations, we demonstrate a significant improvement in energy-efficiency, as compared to all existing protocols. Further, the simulations exhibit increase in improvement of energy-efficiency on enhancement of both the number of ONUs and the round-trip time which indicate that the proposed  \textit {FDOS} algorithm and \textit{wake-up scheduling} protocol is suitable for both ultra-dense and rural network.
\end{itemize}

The rest of the paper is organized as follows. In Section II, we briefly describe the required background of this paper. A mechanism for providing provision to the OLT of waking-up ONUs from sleep mode is proposed in Section III. In Section IV, we formulate the ILP and describe the wake-up scheduling protocol. The proposed FDOS algorithm is proposed in Section V. The energy-efficiency and average delay figures of the proposed mechanism are presented in Section VI. In Section VII, concluding statements are provided.      

\section{Background}\label{bac}
In this section, we first provide the literature review and then briefly describe our previously proposed ONU-assisted protocol, named as OSMP-EO. 
\subsection{Literature Survey}
The authors of \cite{oltass0} have proposed Sleep Mode Aware (SMA) protocol where before sending every GATE message, the OLT predicts the minimum value of time instant, before which the next GATE message of that ONU will not be sent. This time interval is then informed to the ONU during which it employs LPMs. Effect of employing different LPMs is demonstrated in \cite{oltass1}. The same authors have shown the effect of employing different schemes for calculating the transmission slots for both US and DS traffic in \cite{oltass2}.  In SMA protocol, the sleep duration is decided in every cycle and hence, it doesn't affect the delay characteristic. However, the sleep duration being insignificant, the energy saving figure is quite low. In \cite{oltass3}, the authors have shown that a significant improvement in energy-efficient can be achieved by increasing the packet delay.  The Green Bandwidth Allocation (GBA) Algorithm, proposed in \cite{oltass4}, is mathematically analyzed by modeling buffers of every ONU as M/G/1 queue with vacation \cite{oltass5}. In \cite{cyclicsleep}, the authors have proposed a new sleep mode, watchful sleep mode, which employs the advantages of both cyclic sleep and doze mode. The major drawback of \cite{oltass3,oltass4,oltass5,cyclicsleep} is that, since the ONUs sleep for multiple polling cycles and ONUs don't participate in the decision making process, they cannot react to the instantaneous load change of both US and DS traffic leading to plausible violation of SLAs in US and DS.

The authors of \cite{onuass0} have proposed a protocol where an ONU observes the DS traffic for a certain period. If no DS traffic arrives over this period then it sleeps for a fixed duration. The same authors have extended this protocol for both US and DS traffic in \cite{onuass1}. Since, in these two protocols, sleep durations are fixed, they cannot respond to an instantaneous load change of both US and DS traffic.  However, an ONU can observe its own US traffic arrivals even  during the sleep period and hence, it is capable of reacting to the instantaneous change of the US traffic, which has been considered in \cite{doze,chayan,chayanjournal}. The protocols, proposed in \cite{doze} consider doze mode for saving energy. In \cite{chayan}, we have demonstrated that a significant improvement in energy saving can be achieved by selecting a suitable LPM instead of using a single one. This protocol has been extended for delay-sensitive traffic in \cite{chayanjournal}. 
In \cite{dutta}, we have proposed an ONU-assisted mechanism for applying doze mode during the time periods of the active cycles when the other ONUs up-stream and proposed a new ONU-assisted protocol named as OSMP-EO. In this paper, we consider OSMP-EO as the ONU assisted protocol followed by the ONUs to save energy. The reasoning and corresponding details behind considering OSMP-EO are discussed in Section \ref{ssec:algo_frem}. We now briefly explain the OSMP-EO protocol. 
\subsection{OSMP-EO}\label{sssec:osmp}
Here, we briefly describe our previously proposed OSMP-EO protocol. The protocol has been designed for delay-insensitive traffic where the buffer sizes of ONUs are limited  and it attempts to reduce the energy consumption of ONUs while avoiding the packet drop. In OSMP-EO, ONUs alter between deep sleep ($ds$), fast sleep ($fs$) and active mode ($on$). Let, at the current time $t$, $ONU_i$ is in sleep mode $S_m (\in \{ds,fs\})$. In this case, the next decision is taken after a fixed time $T_m$ (i.e at $t_1=t+T_m$), when $ONU_i$ decides whether to retain the sleep mode $S_m$ or it wakes-up. We now explain the rules that are followed by $ONU_i$ for taking this decision. At $t_1$, if sleep mode $S_m$ is retained then the wake-up process can be initiated after $T_m$ duration and the wake-up process requires $T_{sw}^{S_m}$ duration ($T_{sw}^{S_m}$- Sleep-to-wake-up time for sleep mode $S_m$). The OSMP-EO being an ONU-assisted protocol, $ONU_i$ doesn't have any information about the GATE message arrival time. Thus, there is a possibility that $ONU_i$ becomes active immediately after the arrival of GATE message and hence, has to wait until  the arrival of the next GATE message for reporting the buffer state. The bandwidth will be granted in the next cycle. Therefore, if $S_m$ is retained at $t_1$, then the packet drop can be avoided if the buffer doesn't fill up in next $T_m+T_{sw}^{S_m}+2T_{cm}$ duration ($T_{cm}$-maximum cycle time). So, at $t_1$, $ONU_i$ first predict the buffer fill-up time ($T_{BU}^i$) and decide its mode ($M_i$) by following \thref{rule1}.
\begin{rule1}\thlabel{rule1}
	If $T_{BU}^i>T_m+T_{sw}^{S_m}+2T_{cm}$, $S_m$ is retained 
	and otherwise, the wake-up process is initiated. 
\end{rule1}
$ONU_i$ follows \thref{rule1} in every $T_m$ durations until it decides to wake-up from $S_m$. Let, at $t_2$, $ONU_i$ decides to wake-up from $S_m$ which takes $T_{sw}^{S_m}$ duration and therefore, $ONU_i$ becomes active (i.e $M_i=on$) at $t_3=t_2+T_{sw}^{S_m}$. The next decision about $M_i$ will be taken only after up-streaming all packets that are in the buffer at $t_3$.  
In \cite{dutta}, we have shown that a sleep mode provides enhancement in energy-efficiency as compared to other sleep modes with higher power consumption figure, if the buffer fill-up time (which indicates sleep duration) is more than a threshold. Let us denote this threshold for $ds$ and $fs$ as $T_{lb}^{ds}$ and $T_{lb}^{fs}$ respectively and the value of them is calculated in \cite{dutta}.
Thus, $ONU_i$ follows \thref{rule2} for deciding $M_i$.
\begin{rule1} \thlabel{rule2}	
	If $T_{lb}^{ds}\leq T_{BU}^i$ then $M_i=ds$ while if $T_{lb}^{fs}<T_{BU}^i\leq T_{lb}^{ds}$ then $M_i=fs$ and otherwise,
	$M_i=on$ 	
\end{rule1} 
In a polling based protocol like MPCP, the transmitter of an ONU remain idle when other ONUs up-stream. Thus, during the cycles when $ONU_i$ is active, the transmitter of $ONU_i$ can be switched-off during other ONUs' US transmission. For doing this in a complete ONU-assisted manner, we have proposed a mechanism in \cite{dutta}. Next, we discuss the architectural modifications at the OLT which would enable it to wake-up ONUs.
\section{Architectural Modifications}
From the previous discussion (refer Section \ref{intro}), it is now clear that if the SLA of both US and DS data is needed to be satisfied then the OLT should have the provision to force ONUs to wake-up from sleep mode. The OLT has to send some extra information to ONUs in order to wake them from their sleep modes. It is well understood that this information can be realized by a single bit which indicates whether an ONU should wake up or not. This extra information can be sent either  through the GATE message or a new message, termed as "wake-up message". However, during sleep periods, ONUs are unable to receive this wake-up information as the receiver is switched-off. A simple way of resolving this issue is that ONUs periodically switch-on their receivers and then handshake with the OLT for receiving the wake-up information \cite{cyclicsleep}. Then, it is important to decide this period of waking-up the receiver. If this time period is large, then the possibility of violation of SLA of the DS data will be huge which is not desirable. Whereas, a shorter value of it reduces energy-efficiency as the receivers are kept active over a significantly large time period just to receive this one-bit wake-up information. However, since the wake-up information is just one bit, it can be sent with a very low data rate (much lower as compared to the usual data transmission). Thus, detection of the wake-up information doesn't require the entire ONU receiver and a simple ON-OFF Keying (OOK) detection circuit is sufficient to do it. Thus, if an OOK detection circuit (very low power consumption figure) is included at ONUs and they are kept active during the sleep periods of ONUs then any time the OLT can force ONUs to wake-up from sleep mode. In traditional TDM-PON architecture \cite{ipact}, the wake-up information of a particular ONU reaches to all ONUs. Thus, it is important to design a mechanism by which an ONU can identify its own wake-up message. In order to do this, we have used the concept of Sub-Carrier Multiplexing (SCM) \cite{SCM,SCM1}. 

In our proposed mechanism, the OLT allocates $N$ different RF frequencies (sub-carriers) to all $N$ ONUs by which the wake-up message is transmitted, whereas the actual carrier is used for transmitting the DS data. We now explain our propose ONU architecture with the help of Fig. \ref{fig:arec}. The Photo Detector (PD) receives all sub-carriers ($f_{c_1}$, $f_{c_1}$, $\dots$, $f_{c_N}$) along with the down-stream wavelength. The received signal is then divided into two parts. One part is passed through a Low Pass Filter (LPF) which filters out the DS data which is then processed through the line-card (normal PON reception). The other portion is passed through a Band-Pass Filter (BPF) which receives the sub-carrier, allocated for it. The sub-carrier is then frequency down-converted and then detected by OOK detector or in other word, the sub-carrier is detected by a ASK detector. If it receives a wake-up message then the wake-up process of the actual receiver will be initiated. Thus, our proposed architectural modifications allow the OLT to force the ONUs to wake-up from sleep mode at any time instant. 
It can be noted that the required bandwidth at sub-carrier is very low and hence, many sub-carriers can be created.       
\begin{figure}[t]
	\centering
	\includegraphics[scale=.35]{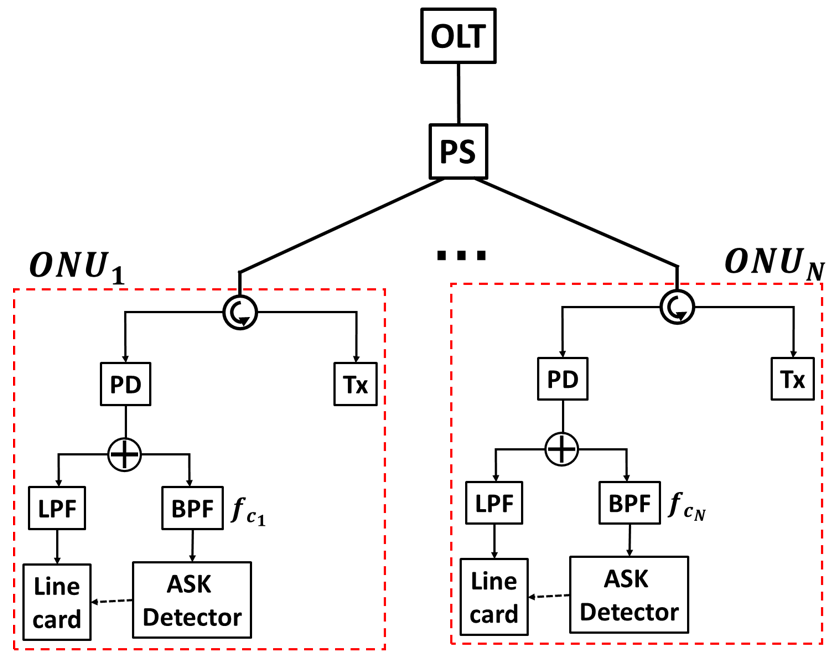}
	\caption{Proposed architecture}
	\label{fig:arec}
\end{figure}     
\section{Optimization problem formulation and wake-up scheduling}
Here, we first provide the algorithm framework of our proposed mechanism. Thereafter, we formulate an optimization problem to decide on which of the ONUs are to be awakened in a cycle which is then used for wake-up scheduling.  
\subsection{Algorithmic framework}\label{ssec:algo_frem}
In this paper, we consider that the optimization problem for deciding which ONUs are to be awakened in a cycle is formulated at the beginning of every polling cycle. It can be noted that, for DBA \cite{ipact}, the duration of polling cycles (cycle times) are adaptive and at a certain time instant, the OLT doesn't have the information about future cycle times. Thus, we divide future times into multiple slots of size, same as the maximum cycle time ($T_{cm}$). Let $\mathbb{N}_s$ and $\mathbb{S}_n$ denote the set of ONUs that are in sleep mode and the set of all slots respectively (in next subsection we show that cardinality of $\mathbb{S}_n$ i.e. $|\mathbb{S}_n|$ is finite). So, the optimization problem allocates a slot ($\in \mathbb{S}_n$) to every ONU ($\in \mathbb{N}_s$) and the objective is to maximize the energy-efficiency.

In a sleep mode protocol, an ONU alters between sleep mode and active mode. 
If the active periods can be reduced then it automatically increases sleep durations which provide an improvement in energy-efficiency. Thus, in this paper, our main objective is to minimize the total active periods of ONUs. One possible way to achieve this is by fairly distributed active ONUs among cycles (refer section \ref{intro}) which causes reduction of the number of active ONUs in a cycle. Since the OLT itself decides which ONUs will be active in a cycle, it can distribute the entire cycle only among the active ONUs. Therefore, in a cycle, an active ONU can up-stream more amount of US data resulting in a reduction of active periods.  Further, we know that every time an ONU wakes-up from sleep mode, a certain time instant is wasted when ONUs consume full power but no US transmission is possible. Thus, another possible way to reduce the total active period of an ONU is by reducing the possibility of state transition from sleep mode to active mode which can be achieved 
by keeping the ONU in the sleep mode as long as possible, after entering into it. While maximizing energy-efficiency, the OLT should maintain SLA (for example, delay bound, packet drop etc.) of both US and DS traffic. In this paper, for simplicity, we consider only the packet drop as the measure of SLA. We also show that a few minor modifications allow  including the average delay bound constraint as well. Thus, for maximizing  energy-efficiency while avoiding packet-drop of both US and DS traffic, in this paper, we first formulate an optimization problem and then for solving this problem, a polynomial time 1-approximation algorithm, namely FDOS, is proposed. 

The OLT can always observe the DS traffic, the OLT predicts  the buffer fill-up time of $ONU_i$ ($\forall i$), $T_{BD}^i$ and schedules the wake-up message such that $ONU_i$ initiates DS transmission before the OLT buffer gets filled up. However, the US traffic arrivals of an ONU can be observed only by that ONU during sleep periods of ONUs. So, in this work, we consider that an ONU (say $ONU_i$) predicts its buffer fill-up time for the US traffic ($T_{BU}^i$) and inform it along with the decided sleep mode ($S_m$) immediately before entering into $S_m$ to the OLT through the REPORT message. This process requires modification of the MPCP. However, the OLT gets some idea about the US traffic arrivals with the information of the report size which can be utilized to estimate $T_{BU}^i$ and hence, $S_m$ of $ONU_i$. In this process, no modification of MPCP is required. However, in this case, the prediction error will be more. In this paper, we use the first process. In can be noted that during sleep mode, the instantaneous traffic load may change. Hence, to avoid packet drop of US traffic, ONUs are provided the facility to wake-up from sleep mode as well and we assume that the previously proposed OSMP-EO protocol is followed by ONUs. Our proposed mechanism is the following:
\begin{itemize}
	\item ONUs follow the OSMP-EO protocol and immediately before entering a sleep mode $S_m$, they inform the predicted buffer fill-up time $T_{BU}^i$ and sleep mode $S_m$ through the REPORT message.
	\item Using these information, at the beginning of every polling cycles, the OLT  first runs the FDOS algorithm for deciding which of the ONUs are to be awakened in a cycle and then the wake-up scheduling is performed.
	\item  The IPACT protocol \cite{ipact} is used for scheduling the GATE message while the Limited scheme (LS) is used as a grant-sizing protocol \cite{ipact}. In LS, grant-size of $ONU_i$ ($G_i$) is same as report size ($R_i$) but not more than a threshold ($T_m^i$) i.e. $G_i=\max(R_i,T_m^i)$. However, the value of $T_m^i$ $\forall i$ is calculated at the beginning of every polling cycles immediately after the FDOS algorithm by:
	$$T_m^i=\begin{cases}
	\dfrac{B_m}{N_a} ~~ \text{if the polling cycle is allocated to $ONU_i$}\\
	0  ~~~~~~\text{otherwise}
	\end{cases} $$ 
	where $B_m$ and $N_a$ denotes the maximum allowable bandwidth of a cycle and the number of active ONUs of that cycle respectively. 
\end{itemize}

\begin{table}[t]
	\begin{center}
		\caption{{Definition of notations}}
		\begin{tabular}{|c|p{7cm}|}
			\hline
			\multicolumn{1}{|c|}{\textbf{Notation}} & 
			\multicolumn{1}{c|}{\textbf{Description}}\\ \hline
			$\mathbb{N}_s$ & Set of all ONUs that are in sleep mode\\
				$\mathbb{N}_s$ & Set of all slots at which ONUs of $\mathbb{S}_n$ can be assigned\\
				$|\mathbb{M}_j|$ & Set of all ONUs that are assigned of slot $j$\\
				$|\mathbb{N}_s|$ & Cardinality of set $\mathbb{N}_s$\\
				$|\mathbb{S}_n|$ & Cardinality of set $\mathbb{S}_n$\\
				$|\mathbb{M}_j|$ & Cardinality of set $\mathbb{M}_j$\\
				$b$ & $\Big\lceil\dfrac{|\mathbb{N}_s|}{|\mathbb{S}_n|}\Big\rceil$\\
				$x^*_{ij}$ & Optimal assignment \\
				$\hat{x}_{ij}$ & Assignment obtained from FDOS algorithm. \\
				$n^*_j$ & $\sum\limits_ix^*_{ij}$ \\
				\hline 
		\end{tabular}
	\end{center}
\end{table} 
\subsection{Problem formulation}
 Here we formulate the proposed optimization problem. 
 \subsubsection{Objective} 
 Since one objective is to maximize fairness, we have to first define a quantitative measurement of fairness. In this paper, we use Jain's Fairness Index \cite{fairind} as the measure of fairness and it is defined as:
 \begin{align}
 J(x_1,x_2,\dots,x_n)=\dfrac{\Big(\sum\limits_{i=0}^nx_i\Big)^2}{\sum\limits_{i=0}^nx_i^2}
 \end{align} 
 In our case, the ONUs are fairly assigned (makes the ONU active) over $| \mathbb{S}_n|=M$ number of cycles. Here, $| \mathbb{S}_n|$ denotes cardinality of set $ \mathbb{S}_n$. If $n_j$ denotes the number of active ONUs in the $j^{th}$ slot then in our case, the Jain's Index is:
\begin{align}
	J(n_1,n_2,\dots,n_M)=\dfrac{\Big(\sum\limits_{j=0}^{M-1}n_j\Big)^2}{\sum\limits_{j=0}^{M-1}n_j^2}
\end{align} 
It is evident that an increment of $J(n_1,n_2,\dots,n_M)$ implies the allocation is more fair. So, one objective of the optimization problem  is to maximize $J(n_1,n_2,\dots,n_M)$.  Since exactly one slot ($\in \mathbb{S}_n$) is assigned to all $| \mathbb{N}_s|=N$ ONUs, $\sum\limits_{j=0}^{M-1}n_j=N$ and hence, maximization of $J(n_1,n_2,\dots,n_M)$ is same as minimization of $\sum\limits_{j=0}^{M-1}n_j^2$. Let us define a binary variable $x_{ij}$ where 
\begin{align}
x_{ij}=
\begin{cases}
1 &\text{if $j^{th}$ slot is assigned to $ONU_i$}\\
0 &\text{otherwise}
\end{cases}
\end{align} It is easy to note that $n_j=\sum\limits_{i=0}^{N-1}x_{ij}$ and $n_j^2=\sum\limits_{i=0}^{N-1}\sum\limits_{k=0}^{N-1}x_{ij}x_{kj}$.
So, fair distribution of ONUs ($\in \mathbb{N}_s$) among slots ($\in \mathbb{S}_n$) require minimization of 
$\sum\limits_{j=0}^{M-1}\sum\limits_{i=0}^{N-1}\sum\limits_{k=0}^{N-1}x_{ij}x_{kj}=f_1(x_{ij})$.

The other objective of the optimization problem is to maximize the sleep duration of ONUs. Thus, the allocated slot should be as late as possible or in other word, maximization of $\sum\limits_{j=0}^{M-1}jx_{ij}~\forall i$ is required. Maximization of sleep durations for all ONUs can be mathematically modeled by taking weighted ($w_i$ is the weight for $ONU_i$) summation of the objectives of all ONUs and the objective function is given by: $\sum\limits_{i=0}^{N}\sum\limits_{j=0}^{M-1}jw_ix_{ij}$ or in general, $\sum\limits_{i=0}^{N-1}\sum\limits_{j=0}^{M-1}w_{ij}x_{ij}=f_2(x_{ij})$ where $w_{ij}=jw_i$ in this case. The optimization problem is multi-objective where the objective functions are: $f_1(x_{ij})$ and $f_2(x_{ij})$. An weighted average of these two objectives can be taken for making the multi-objective optimization to a single objective optimization. If $W$ is the weight factor then the objective function of the optimization, $f(x_{ij})$, can be defined as:   
\begin{align}\label{mini}
f(x_{ij})=Wf_1(x_{ij})-f_2(x_{ij})
\end{align}  
Since, our primary objective is to maximize the fairness, in this paper, we assume $W$ is very high: $W>\sum\limits_{i=0}^{N-1}\sum\limits_{j=0}^{M-1}w_{ij}$.
\subsubsection{Constraints}
Now, we explain all constraints that are needed to be satisfied while minimizing $f(x_{ij})$. 
\paragraph{\textbf{Single slot allocation constraint}}
As mentioned in Section \ref{ssec:algo_frem}, a single slot is assigned to every ONUs ($\in \mathbb{N}_s$) which can be ensured by providing the constraint given in eq. (\ref{const1}).
\begin{align}\label{const1}
\sum\limits_{j=0}^{M-1}x_{ij}=1~~~~\forall i\in\mathbb{N}_s
\end{align}

Next, we find all feasible slots that can be allocated to $ONU_i ~(\forall i\in \mathbb{N}_s)$. 
\paragraph{\textbf{Wake-up constraint due to past wake-up message transmission}}
The set $\mathbb{N}_s$ also includes ONUs for whom the wake-up messages have already been sent and the set of these ONUs is denoted as $\mathbb{N}_{ws}$. let, at time $t_{w}^i$, the wake-up message has been sent to $ONU_i$ by the OLT which takes $\frac{T_{rtt}^i}{2}$ time interval to reach the $ONU_i$. Then, it initiate the wake-up process from sleep mode $S_m^i$ which requires another $T_{sw}^{S_m^i}$ and thus, $ONU_i$ can receive GATE message after $t_{w}^i+\frac{T_{rtt}^i}{2}+T_{sw}^{S_m^i}$ time instant. Let, the last GATE message for $ONU_i$ has been sent at $t_{lg}^i$. $ONU_i$ will be able to receive this GATE if $t_{w}^i+T_{sw}^{S_m^i}<t_{lg}^i$ and in this case, in the first slot only, $ONU_i$ will be able to send the REPORT and initiate the US transmission in the next slot. In other cases, the OLT must be able to receive the REPORT message in $\Big\lceil\dfrac{t_{w}^i+T_{rtt}^i+T_{sw}^{S_m^i}}{T_{cm}}\Big\rceil=S_a-1$ slot and the OLT allocates grant to $ONU_i$ in the next slot i.e. $S_a$. This can be ensured by the constraint equation, given in eq. (\ref{wake_sent}).
\begin{align}\label{wake_sent}
	x_{ij}=
	S_a ~~\forall i\in\mathbb{N}_{ws} 
\end{align}    

We now show that for all other ONUs (i.e ONUs that are in set $\mathbb{N}_s\setminus\mathbb{N}_{ws}$), the slot number of all feasible slots are in between an upper and lower bound. 
\subsubsection*{\textbf{Upper bound calculation}}
While allocating slots, the OLT should avoid packet drop of both US and DS traffic (refer section \ref{intro} which also provides two constraints namely US packet drop avoidance constraint and DS packet drop avoidance constraint. Further,  in MPCP \cite{ieee2010ieee}, if an ONU doesn't send any US data over a certain period then the OLT de-register that ONU. Avoidance of de-registration imposes an upper-bound on sleep duration and hence, the slot number of feasible slots which is termed as De-registration avoidance constraint.   

\paragraph{\textbf{US packet drop avoidance constraint}}
Let, at time instant $t_{lr}^i$, the OLT received the latest REPORT message from $ONU_i$ when $ONU_i$ informed the buffer fill-up time and sleep mode as $T_{BU}^i$ and $S_m^i$ respectively. So, the latest REPORT was transmitted by $ONU_i$ at time instant $t_{lr}^i-\frac{T_{rtt}^i}{2}$.
If the prediction is correct then $ONU_i$ will wake-up from sleep mode $S_m^i$ at time instant $t_{lr}^i+T_{UB}^i-\frac{T_{rtt}^i}{2}-2T_{cm}$ (refer \thref{rule1} of OSPM-EO protocol, explained in section \ref{bac}). So, the OLT can only force $ONU_i$ to wake-up from $S_m^i$ before time instant, $t_{lr}^i+T_{UB}^i-\frac{T_{rtt}^i}{2}-2T_{cm}$. If $ONU_i$ wakes-up such that it can transmit US data even at the start of a slot then in that slot $ONU_i$ will definitely be able to send the REPORT message. The US transmission will be initiated in the next slot. Thus, in this case, $ONU_i$ must send the REPORT message on or before  $\Big\lfloor \dfrac{t_{lr}^i+T_{UB}^i-2T_{cm}-t}{T_{cm}}\Big\rfloor=UB_{US}^i-1$ slot (as shown in Fig. \ref{fig:ub_us}) and up-stream data in  $UB_{US}^i$ slot where $t$ is the present time instant.
\begin{figure}[t]
	\centering
	\includegraphics[scale=.4]{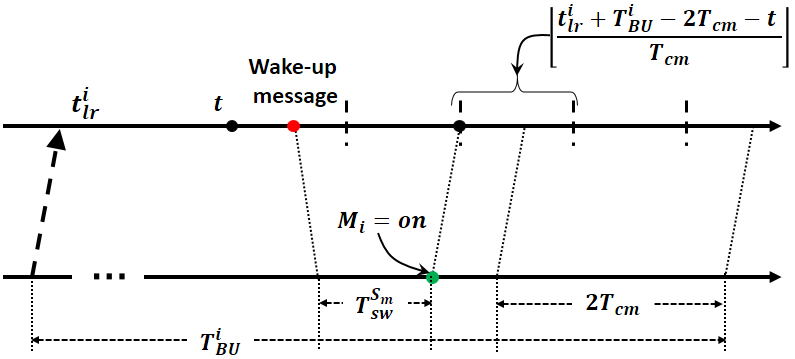}
	\caption{Calculation of upper bound of slot for avoiding packet drop of US data.}
	\vspace{-.5 cm}
	\label{fig:ub_us}
\end{figure}
This constraint can be modeled by eq. (\ref{ub_us}). 
\begin{align}\label{ub_us}
	x_{ij}=0 \text{~~if }j>UB_{US}^i ~~\forall i\in\mathbb{N}_s\setminus\mathbb{N}_{ws} 
\end{align}
\paragraph{\textbf{DS packet drop avoidance constraint}}
Let, at current time $t$, the buffer fill-up time is predicted as $T_{BD}^i$. Thus, the DS transmission should be initiated before time instant: $t+T_{BD}^i$. This can be ensured if $ONU_i$ wakes-up from sleep mode at the beginning of the slot $\Big\lfloor\dfrac{T_{BD}^i}{T_{cm}}\Big\rfloor$. $ONU_i$ will also be able to sent the REPORT message in that slot and US transmission can be initiate in the next slot: $\Big\lfloor\dfrac{T_{BD}^i}{T_{cm}}\Big\rfloor+1=UB_{DS}^i$. Thus, the buffer fill-up time provides the following constraint, given in eq. (\ref{ub_ds}).
\begin{align}\label{ub_ds}
	x_{ij}=0 \text{~~if }j>UB_{DS}^i~~\forall i\in\mathbb{N}_s\setminus\mathbb{N}_{ws}
\end{align} 

\paragraph{\textbf{De-registration avoidance constraint}}  If an ONU sleeps for such a time period that no US data or REPORT message is send within the de-registration time then that ONU have to wait till the next contention window \cite{ieee2010ieee} when it can again register and can initiate the US transmission. In this case, ONUs cannot avoid the packet drop as they has to wait for the next contention window  which is not desirable. Thus,  an ONU should wake-up such that it can send at-least one REPORT message within this de-registration time denoted as $T_{dr}$. Since the OLT has receive the latest REPORT message from $ONU_i$ at $t_{lr}^ i$, the OLT should receive the REPORT message before $t_{lr}^ i+T_{dr}$. This can be ensured if $ONU_i$ wakes-up such that it can transmit REPORT at the beginning of the slot $\Big\lfloor\dfrac{t_{lr}^i+T_{dr}-t}{T_{cm}}\Big\rfloor=UB_{d}^i-1$ and in the next slot i.e. $UB_{d}^i$, US data transmission will be initiated. Thus, in order to avoid de-registration, the following constraint, given in eq. (\ref{ub_dreg}), is need to be satisfied will performing the assignment.  
\begin{align}\label{ub_dreg}
x_{ij}=0 \text{~~if }j>UB_{d}^i~~\forall i\in\mathbb{N}_s\setminus\mathbb{N}_{ws}
\end{align} 

Thus, eq. (\ref{ub_ds})\textendash eq.(\ref{ub_dreg}) can be combined by eq. (\ref{ub}).
\begin{align}\label{ub}
\nonumber x_{ij}=0 \text{~~if }j>\min(UB_{US}^i,UB_{DS}^i,UB_d^i)=UB^i\\\forall i\in\mathbb{N}_s\setminus\mathbb{N}_{ws}
\end{align}
Thus, the number of slots over which the allocation of ONUs ($\in \mathbb{N}_s$) can be performed i.e $|\mathbb{S}_n|$ is finite and upper bounded by $UB^i$. 
\subsubsection*{\textbf{Lower bound calculation}}
Now, we calculate the lower bound of slot number of all feasible slots.
\paragraph{\textbf{Minimum initialization time constraint}}
If the wake-up message is send at the present time, $t$ then it reaches the OLT at $t+\frac{T_{rtt}^i}{2}$ when the wake-up process will start which takes another $T_{sw}^{S_m^i}$ duration. The US transmission takes another $\frac{T_{rtt}^i}{2}$ duration to reach the OLT. Thus, $ONU_i$ cannot send the REPORT message before slot $\Big\lfloor\dfrac{t+{T_{rtt}^i}+T_{sw}^{S_m^i}}{T_{cm}}\Big\rfloor$ if the GATE message is sent after waking-up which is not always guaranteed. However, in the next slot, it will definitely be able to report its buffer state. Thus, $ONU_i$ can always be able to send the REPORT message in any slot whose slot number is not lower than  $\Big\lceil\dfrac{t+{T_{rtt}^i}+T_{sw}^{S_m^i}}{T_{cm}}\Big\rceil=LB_{p}^i-1$. In the next slot i.e. $LB_{p}^i$ US data transmission can be initiated. This provide the following constraint as given in eq. (\ref{lb_present}). 
\begin{align}\label{lb_present}
 x_{ij}=0 \text{~~if }j<LB_{p}^i\forall i\in\mathbb{N}_s\setminus\mathbb{N}_{ws}
\end{align} 

\paragraph{\textbf{Minimum sleep duration constraint}}
As discussed in OSMP-EO protocol, if $ONU_i$ enters into sleep mode $S_m^i$, an improvement in energy-efficiency can only be achieved if the buffer fill-up time is more than a threshold, $T_{lb}^{S_m^i}$ (refer section \ref{sssec:osmp}) Further, in OSMP-EO, ONUs wake-up from sleep mode $T_{sw}^{S_m^i}+2T_{cm}$ and hence, $ONU_i$ should sleep for at least $T_{lb}^{S_m^i}-T_{sw}^{S_m^i}-2T_{cm}$ or in other word, it wakes-up from sleep mode after $(t_{lr}^i-T_{rtt}^i)+T_{lb}^{S_m^i}-2T_{cm}$. By using the same argument as provided above, this event can be ensured by the following the constraint as provided in eq. (\ref{lb_minsleep}).
\begin{align}\label{lb_minsleep}
\nonumber x_{ij}=0 \text{~~if }j<LB_{ms}^i\forall i\in\mathbb{N}_s\setminus\mathbb{N}_{ws}  \\\text{ where } LB_{ms}^i=\Big\lceil\dfrac{t_{lr}^i+T_{lb}^{S_m^i}-2T_{cm}}{T_{cm}}\Big\rceil+1 
\end{align}  

Therefore, eq. (\ref{lb_present})\textendash eq. (\ref{lb_minsleep}) can be combined by eq. (\ref{lb}).
\begin{align}\label{lb}
x_{ij}=0 \text{~~if }j<\max(LB_p^i,LB_{ms}^i)=LB^i~~\forall i\in\mathbb{N}_s\setminus\mathbb{N}_{ws}
\end{align}  
Let, $\mathcal{A}$ be the set of all $(i,j)$ pairs where $ONU_i$ ($\forall i\in \mathbb{N}_s$) can be assigned to the $j^{th}$ ($\forall j\in\mathbb{S}_n$) slot. Thus, $\mathcal{A}$ is given by eq. (\ref{aset}).
\begin{align}\label{aset}
	\mathcal{A}=\{(i,j)|LB^i\leq j\leq UB^i,\forall i\in\mathbb{N}_s\}
\end{align} 
\subsubsection*{\textbf{Final optimization problem}} By summarizing eq. (\ref{mini})\textendash eq. (\ref{aset}), the optimization problem is given by eq. (\ref{optm}).
\begin{subequations}\label{optm}
	\begin{alignat}{2}
	&\!\min_{(i,j)\in\mathcal{A},~
	(k,j)\in\mathcal{A}}        &\qquad& W\sum\limits_{i,j,k}x_{ij}x_{kj}-    \sum\limits_{i,j}w_{ij}x_{ij}\label{eq:optProb}\\
	&\text{subject to} &      & \sum\limits_{j=0}^{M-1}x_{ij}=1~~~~\forall i\in\mathbb{N}_s\label{eq:constraint1}
	\end{alignat}
\end{subequations}

 If the delay bound or some other SLA parameters are considered then it modifies the set $\mathcal{A}$ while the optimization problem of eq. (\ref{optm}) remains the same. 
\subsection{Wake-up scheduling}
The OLT gets allocation for all ONUs by solving eq. (\ref{optm}). Suppose, $j^{th}$ slot is allocated to $ONU_i$. If the current time is $t$ then start time of the $j^{th}$ slot is  $t+(j-1)T_{cm}$. If the sleep mode of $ONU_i$ is $S_m^i$ then the wake-up process requires $T_{sw}^{S_c^i}+T_{rtt}^i$ duration. Thus, the wake-up message for $ONU_i$ should be sent at $t_{wk}^i=t+(j-1)T_{cm}-T_{sw}^{S_c^i}-T_{rtt}^i$. It can be noted that if $t_{wk}^i>t+T_{cm}$ then in the next decision point $ONU_i$ remains in sleep mode. Thus, the allocation of $ONU_i$ can also be performed at the beginning of the next slot without deteriorating its performance. In-fact it may provide better allocation. Thus, if $t_{wk}<t+T_{cm}$ then only the wake-up message will be sent. 
 
Now, we design an $1$-approximation algorithm for solving this optimization problem of eq. (\ref{optm}). 
\section{Proposed algorithm}
Here, we explain the proposed FDOS algorithm in detail.
\subsection{Motivation and framework}
Here, we propose a heuristic algorithm, named as FDOS (FDOS- Fair Distribution of ONUs among Slots), for solving the proposed ILP (refer eq. (\ref{optm})) and prove that even in worse case, the deviation of the value of  $f(x_{ij})$ (refer eq. (\ref{mini})), obtained by solving the algorithm from its optimal value (say $f^*(x_{ij})$) is upper bounded. Such algorithms are named as approximation algorithm and the mathematical definition of $\epsilon$-approximation algorithm \cite{approx} is given in \thref{def1}.
\begin{Definition}\thlabel{def1}
	Let $\mathcal{O}$ be an optimization problem with integral cost function $c$. Further, let $\alpha$ is an algorithm which returns a feasible solution $g_\alpha(I)$ at instant $I$ of $\mathcal{O}$ while the optimal value is $g^*_\alpha(I)$. $\alpha$ is called an $\epsilon$-approximation algorithm of $\mathcal{O}$ for some $\epsilon\geq 0$  if and only if 
	$$\rho(I)=\dfrac{|c(g_\alpha(I))-c(g^*_\alpha(I))|}{c(g^*_\alpha(I))}\leq \epsilon ~~\forall I$$
\end{Definition}  
We now design FDOS algorithm such that $\rho(I)$ for the objective function $f(x_{ij})$, denoted as $\rho_f(I)$, is upper bounded by some $\epsilon$ value, which proves FDOS algorithm is an $\epsilon$-approximation algorithm. Let us denote the optimal assignment and the assignment that is obtained from FDOS by $x^*_{ij}$ and $\hat{x}_{ij}$ respectively. Thus, $\rho_f(I)$ is given by eq. (\ref{bound}).
\begin{align}\label{bound}
\rho_f(I)=\dfrac{f(\hat{x}_{ij})-f(x^*_{ij})}{f(x^*_{ij})}
\end{align}
Here, the modulus sign is not required since it is a minimization problem.
The cost function, $f(x_{ij})$, has two components: $f_1(x_{ij})$ and $f_2(x_{ij})$. Let us denote $\rho(I)$ value for $f_1(x_{ij})$ and $f_2(x_{ij})$ as $\rho_{f_1}(I)$ and $\rho_{f_2}(I)$ respectively and they are given by eq. (\ref{fl1}) and eq. (\ref{fl2}).
\begin{align}
\rho_{f_1}(I)=\dfrac{f_1(\hat{x}_{ij})-f_1(x^*_{ij})}{f_1(x^*_{ij})}\label{fl1}\\\label{fl2}
\rho_{f_2}(I)=\dfrac{f_2(x^*_{ij})-f_2(\hat{x}_{ij})}{f_2(x^*_{ij})}
\end{align}
It is evident that if $\rho_{f_1}(I)\leq \epsilon$ and $\rho_{f_2}(I)\leq \epsilon$ then it is sufficient to claim $\rho_f(I)\leq \epsilon$. In this paper, FDOS algorithm will be developed in such a manner that $\rho_{f_1}(I)\leq \epsilon$ and $\rho_{f_2}(I)\leq \epsilon$ which eventually proves that FDOS is a $\epsilon$-approximation algorithm. Toward this target, in FDOS algorithm, we make $f_1(x_{ij})$ upper bounded by some value say $\mathcal{U}_b$ and iteratively change $\mathcal{U}_b$ such that when the algorithm terminate,  $\rho_f(I)\leq \epsilon$. We know, $f_1(x_{ij})=\sum\limits_{j=0}^{M-1}\sum\limits_{i=0}^{N-1}\sum\limits_{k=0}^{N-1}x_{ij}x_{kj}=\sum\limits_{j=0}^{M-1}n_j^2$. Thus, if we upper bound $n_j$ by $\mathcal{U}_b^j$ then $f_1(x_{ij})\leq \sum\limits_{j=0}^{M-1}\mathcal({U}_b^j)^2=\mathcal{U}_b$. Therefore, we include the constraint $n_j=\sum\limits_{i=0}^{N-1}x_{ij}\leq \mathcal{U}_b^j$ $\forall j$ while keeping $f_2(x_{ij})$ in the objective function and then iteratively change the value of $\mathcal{U}_b^j$ such that at the termination $\rho_{f_1}(I)\leq \epsilon$ $\forall I$. We finally prove that the value of $f_2(x_{ij})$, obtained from FDOS algorithm cannot be lower than its optima value of the proposed ILP (refer eq. (\ref{optm})) which proves $\rho_f(I)\leq \epsilon$ or in other word, FDOS is a $\epsilon$-approximation algorithm. Thus, the modified ILP, denoted by $T(\mathcal{U}_b^j)$, turns out to be:
\begin{subequations}\label{optm1}
	\begin{alignat}{2}
	&\!\max_{(i,j)\in\mathcal{A}}        &\qquad&     \sum\limits_{i,j}w_{ij}x_{ij}\label{eq:optProb1}\\
	&\text{subject to} &      & \sum\limits_{j=0}^{M-1}x_{ij}=1~~~~~~\forall i\in\mathbb{N}_s\label{eq:constraint11}\\ & & &\sum\limits_{i=0}^{N-1}x_{ij}\leq \mathcal{U}_b^j~~~~\forall j\in\mathbb{S}_n\label{eq:constraint12}
	\end{alignat}
\end{subequations}
This optimization problem, $T(\mathcal{U}_b^j)$ (refer eq. (\ref{optm1})), is an well known imbalance Transportation Problem \cite{TP} which can be easily converted to balanced Transportation Problem (TP) simply by adding some dummy nodes and by setting $w_{ij}=0~\forall j$ for all dummy nodes \cite{balancing}. Many algorithms are present in the literature for solving the balanced TP in polynomial times complexity (for example \cite{poly}) and any of them can be used for solving $T(\mathcal{U}_b^j)$ (refer eq. (\ref{optm1})). Next, we explain how to change the value of $T(\mathcal{U}_b^j)$ iteratively in FDOS algorithm.

\subsection{FDOS Algorithm}If $N_s$ number of ONUs are assigned in $M_s$ number of slots then there exist at least one slot (say $j$) in which at least  $\big\lceil\frac{N_s}{M_s}\big\rceil$ number of ONUs are assigned. In the first iteration $N_s=N$ and $M_s=M$. In FDOS, we initialize $\mathcal{U}_b^j~(\forall j\in \mathbb{S}_n)$ by $b=\big\lceil\frac{N}{M}\big\rceil$. It can be noted that due to the presence of $\mathcal{A}$, the solution of $T(\mathcal{U}_b^j)$ with $\mathcal{U}_b^j=b=\big\lceil\frac{N}{M}\big\rceil$ may not be feasible. 
If solution of $T(\mathcal{U}_b^j)$ at the initial iteration with $T(\mathcal{U}_b^j)=\lceil\frac{N_s}{M_s}\rceil~\forall j$ is feasible then the assignment, obtained from it, is the final assignment of the FDOS algorithm. 
If the solution is not feasible then set of all slots i.e $\mathbb{S}_n$ is divided into two sets say $\mathbb{L}$ and $\mathbb{O}$ and set of all ONUs i.e $\mathbb{N}_s$ is divided into another two sets say $\mathbb{N}_L$ and $\mathbb{N}_O$ such that in optimal assignment no ONUs of $\mathbb{N}_L$ can be assigned of any of the slots of $\mathbb{O}$ and no ONUs of $\mathbb{N}_O$ can be assigned of any of the slots of $\mathbb{L}$. Generation of theses sets (i.e $\mathbb{L}$, $\mathbb{O}$, $\mathbb{N}_L$, and $\mathbb{N}_O$) will be explained later on this section. Thus, the optimization problem of eq. (\ref{optm}) can be segregated into two sub-problems: one is for assigning ONUs of $\mathbb{N}_L$ into slots of $\mathbb{L}$ and the other is for assigning ONUs of $\mathbb{N}_O$ into slots of $\mathbb{O}$ and the functional value of the actual problem is the summation of the functional values of these two sub-problems. Let us denote the assignment of these two sub-problems as: $xl^*_{ij}$ and $xo^*_{ij}$ respectively and thus, 
\begin{align}\label{sum}
f(x^*_{ij})=f(xl^*_{ij})+f(xo^*_{ij})
\end{align} 

In FDOS, assignment for these two sub-problems are performed separately in the similar way as discussed above. Thus, ONUs of $\mathbb{N}_L$ are assigned into slots of $\mathbb{L}$ by solving $T(\mathcal{U}_b^j)$ with $\mathcal{U}_b^j=\lceil\frac{|\mathbb{N}_L|}{|\mathbb{L}|}\rceil$ while assignment of ONUs of $\mathbb{N}_O$ into slots of $\mathbb{O}$ are performed by solving $T(\mathcal{U}_b^j)$ with $\mathcal{U}_b^j=\lceil\frac{|\mathbb{N}_O|}{|\mathbb{O}|}\rceil$ where $|\mathbb{L}|$, $|\mathbb{O}|$, $|\mathbb{N}_L|$, and $|\mathbb{N}_O|$ denotes cardinality of set $\mathbb{L}$, $\mathbb{O}$, $\mathbb{N}_L$ and $\mathbb{N}_O$ respectively. If the solution of $T(\mathcal{U}_b^j)$ with $\mathcal{U}_b^j=\lceil\frac{|\mathbb{N}_L|}{|\mathbb{L}|}\rceil$ is feasible and the achieved assignment is $\hat{xl}_{ij}$ then the assignment of ONUs ($\in \mathbb{N}_L$) in FDOS is $\hat{xl}_{ij}$. 
If the solution is again infeasible then each of $\mathbb{L}$ and $\mathbb{N}_L$ are  divided into two sets in the similar way as discussed above and the iteration will continue until we get a feasible solution.   
Similar steps will be followed for assigning ONUs of $\mathbb{N}_O$ to slots of $\mathbb{O}$. If the solution is feasible then make the assignment permanent and otherwise,  
 again divide both $\mathbb{O}$ and $\mathbb{N}_O$ into two sets as discussed. This process will continue until a feasible solution is achieved for all ONUs and in later part we prove that the algorithm always converges (refer Section \ref{sssec:comple}). All iterations of FDOS algorithm are provided in Algorithm \ref{algo2}.  
 \begin{algorithm}[h]
 	\SetAlgoLined
 	\KwIn{$\mathcal{N}(\subseteq \mathbb{N}_s)$- Set of all ONUs for whom assignment is performed, $\mathcal{S}(\subseteq \mathbb{S}_n)$- Set of slots at which assignment of $\mathcal{N}$ is performed.}
 	\textbf{Initialization:} $\mathcal{N}=\mathbb{N}_s$, $\mathcal{S}=\mathbb{S}_n$\;
 	Solve $T\big(\lceil\frac{|\mathcal{N}|}{|\mathcal{S}|}\rceil\big)$ (refer eq. (\ref{optm1}))\;
 	\eIf{solution is feasible~\tcp{stopping criteria}}{Make the assignment of all ONUs ($\in \mathcal{N}$) parmanent\;
 		return\;}
 	{	
 		Find sets $\mathbb{L}$, $\mathbb{O}$, $\mathbb{N}_L$ and $\mathbb{N}_O$ using Algorithm 2\;	
 		call FDOS($\mathbb{N}_L$,$\mathbb{L}$)\;~\tcp{recursive call for slots ($\in \mathbb{L}$)}
 		call FDOS($\mathbb{N}_O$,$\mathbb{O}$)\;~\tcp{recursive call for slots ($\in \mathbb{O}$)}
 	}
 	\caption{Proposed FDOS($\mathcal{N}$,$\mathcal{S}$) algorithm}
 	 	 \label{algo2}
 \end{algorithm}
\subsection{Properties of FDOS algorithm}
Here, we prove that FDOS is an  1-approximation algorithm. In FDOS, if $N$ number of ONUs are assigned in $M$ number of slots then the first step is to solve balanced TP problem, $T(\mathcal{U}_b^j)$, with $\mathcal{U}_b^j=\Big\lceil\dfrac{N}{M}\Big\rceil=b$. If the solution is feasible then  in \thref{cl1}, we prove $\rho_f(I)$ is upper bounded. 
\begin{claim}\thlabel{cl1}
	If solution of $T(\mathcal{U}_b^j)$ with $\mathcal{U}_b^j=b~\forall j$ is feasible then $\rho_{f}(I)\leq 1~\forall I$.
\end{claim}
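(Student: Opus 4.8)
The plan is to reduce the claim to the single inequality $f(\hat{x}_{ij})\le 2f(x^*_{ij})$, which is algebraically equivalent to $\rho_f(I)\le 1$ since this is a minimization with $f(x^*_{ij})>0$ (indeed $f_1(x^*_{ij})\ge 1$ and $f_2(x^*_{ij})\le\sum_{i,j}w_{ij}<W$, so $f(x^*_{ij})=Wf_1(x^*_{ij})-f_2(x^*_{ij})>0$). I would then exploit the two-part decomposition already set up in the paper, bounding the deviation in $f_1$ and the deviation in $f_2$ separately, using the choice $W>\sum_{i,j}w_{ij}$ as the decisive lever that lets the $f_1$-term dominate.

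First I would pin down $f_1(x^*_{ij})$. Because $W>\sum_{i,j}w_{ij}\ge f_2$ at every feasible point and $f_1$ is integer-valued, any feasible assignment with strictly smaller $f_1$ would also have strictly smaller $f$; hence $x^*_{ij}$ must minimize $f_1$ over all feasible assignments. A Cauchy--Schwarz (power-mean) bound over the $M=|\mathbb{S}_n|$ slots then gives $f_1(x^*_{ij})=\sum_j (n^*_j)^2\ge N^2/M$. On the other side, the FDOS output $\hat{x}_{ij}$ solves $T(b)$ with every slot capped at $b$, so each $\hat{n}_j\le b$ and therefore $f_1(\hat{x}_{ij})=\sum_j \hat{n}_j^2\le b\sum_j\hat{n}_j=bN$. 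Substituting $b=\lceil N/M\rceil$ and separating the cases $N\le M$ (where $b=1$ forces $f_1(\hat{x}_{ij})=f_1(x^*_{ij})=N$) and $N>M$, these two bounds yield $f_1(\hat{x}_{ij})\le 2f_1(x^*_{ij})-1$, i.e. $\rho_{f_1}(I)\le 1$ with an integer unit of slack to spare.

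The second ingredient is $f_2(\hat{x}_{ij})\ge f_2(x^*_{ij})$, i.e. $\rho_{f_2}(I)\le 0$. Since $\hat{x}_{ij}$ is by construction the maximizer of $f_2$ over the $T(b)$-feasible region, it suffices to show that $x^*_{ij}$ is itself $T(b)$-feasible, namely that the $f_1$-minimizer never overloads a slot: $n^*_j\le b$ for all $j$. I expect this to be the main obstacle, because the naive ``move an ONU from an over-full slot $j_0$ (with $n^*_{j_0}\ge b+1$) to an under-full slot $j_1$ (with $n^*_{j_1}\le b-1$, which must exist since $\sum_j n^*_j=N\le bM$)'' step can be blocked by the feasibility set $\mathcal{A}$. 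The way around it is to use the hypothesis that $T(b)$ is feasible: fix any $T(b)$-feasible witness $y$ (all loads $\le b$) and compare it with $x^*_{ij}$; their symmetric difference decomposes into alternating paths and cycles in the ONU--slot bipartite graph, and because each ONU's feasible slot set is a contiguous interval $[LB^i,UB^i]$, one such alternating route connects an over-full slot of $x^*_{ij}$ to a strictly less loaded slot while keeping every reassignment inside $\mathcal{A}$. Rerouting along it decreases $\sum_j n_j^2$, contradicting the minimality of $f_1(x^*_{ij})$; hence $n^*_j\le b$ and $f_2(\hat{x}_{ij})\ge f_2(x^*_{ij})$.

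Finally I would assemble the pieces. Writing $f(\hat{x}_{ij})-2f(x^*_{ij})=W\big(f_1(\hat{x}_{ij})-2f_1(x^*_{ij})\big)+\big(2f_2(x^*_{ij})-f_2(\hat{x}_{ij})\big)$, the first term is at most $-W$ by the unit slack from the $f_1$ bound, while the second term is at most $f_2(x^*_{ij})\le\sum_{i,j}w_{ij}<W$ by $f_2(\hat{x}_{ij})\ge f_2(x^*_{ij})$. Adding them gives $f(\hat{x}_{ij})-2f(x^*_{ij})<0$, so $f(\hat{x}_{ij})\le 2f(x^*_{ij})$ and therefore $\rho_f(I)\le 1$ for every instance $I$. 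The only genuinely delicate step is the interval-exchange argument establishing $n^*_j\le b$; everything else is bookkeeping that hinges on the large-$W$ assumption.
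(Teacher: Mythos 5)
Your proposal is correct and follows the same skeleton as the paper: split $f$ into the fairness term $f_1$ and the weighted term $f_2$, show $\rho_{f_1}(I)\leq 1$ and $f_2(\hat{x}_{ij})\geq f_2(x^*_{ij})$, and combine using the large-$W$ assumption (these are exactly the paper's Claims 2 and 3). The differences are in how you establish the ingredients. For the $f_1$ ratio you use the crude but sufficient pair $f_1(x^*_{ij})\geq N^2/M$ (Cauchy--Schwarz) and $f_1(\hat{x}_{ij})\leq bN$ (slot cap), whereas the paper computes the exact extremal configurations (its Lemmas 2 and 3); both routes need the same $b=1$ case split, and your arithmetic for $b\geq 2$ checks out, including the integrality step that upgrades $bN<2N^2/M$ to $f_1(\hat{x}_{ij})\leq 2f_1(x^*_{ij})-1$. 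For the key fact $n^*_j\leq b$ (the paper's Lemma~1), your flow-decomposition argument against a $T(b)$-feasible witness is cleaner than the paper's expanding-set contradiction, and it does not actually require the interval structure of $\mathcal{A}$ that you invoke: every rerouted ONU moves to the slot it occupies in the witness, so feasibility of each step is automatic. Finally, your assembly is \emph{more} careful than the paper's: the paper asserts that $\rho_{f_1}\leq 1$ and $\rho_{f_2}\leq 0$ "evidently" imply $\rho_f\leq 1$, but since the denominator of $\rho_f$ is $Wf_1(x^*_{ij})-f_2(x^*_{ij})<Wf_1(x^*_{ij})$, one genuinely needs the extra unit of slack in the $f_1$ bound (contributing $-W$) to absorb the residual $f_2(x^*_{ij})<W$; your decomposition of $f(\hat{x}_{ij})-2f(x^*_{ij})$ makes that cancellation explicit.
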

\begin{proof}
 \thref{cl1} can be easily proved by using \thref{cl2} and \thref{cl3}.
\end{proof}
\begin{claim}\thlabel{cl2}
	If solution of $T(\mathcal{U}_b^j)$ with $\mathcal{U}_b^j=\lceil\frac{N_s}{M_s}\rceil= b~\forall j$ is feasible then $\rho_{f_1}(I)\leq 1~\forall I$.
\end{claim}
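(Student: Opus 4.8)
The plan is to prove the equivalent inequality $f_1(\hat{x}_{ij})\le 2\,f_1(x^*_{ij})$, which is precisely $\rho_{f_1}(I)\le 1$ by eq. (\ref{fl1}). I would obtain this by sandwiching the two costs: an upper bound on the FDOS cost $f_1(\hat{x}_{ij})$ coming from feasibility of the capacitated transportation problem, and a lower bound on the optimal cost $f_1(x^*_{ij})$ coming from purely combinatorial estimates. For the upper bound, since by hypothesis $T(\mathcal{U}_b^j)$ with $\mathcal{U}_b^j=b=\lceil N_s/M_s\rceil$ is feasible, its returned assignment $\hat{x}_{ij}$ obeys the capacity constraint (\ref{eq:constraint12}), i.e. $\hat n_j=\sum_i\hat x_{ij}\le b$ for every slot $j$. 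Using $n_j^2\le b\,n_j$ whenever $0\le n_j\le b$, together with $\sum_j\hat n_j=N_s$ from the single-slot constraint (\ref{eq:constraint11}), gives $f_1(\hat x_{ij})=\sum_j\hat n_j^2\le b\sum_j\hat n_j=b\,N_s$.

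For the lower bound on $f_1(x^*_{ij})$ I would invoke two elementary facts that hold for \emph{any} feasible assignment of $N_s$ ONUs over $M_s$ slots, hence in particular for the optimum $x^*_{ij}$. By Cauchy--Schwarz, $\sum_j (n_j^*)^2\ge (\sum_j n_j^*)^2/M_s=N_s^2/M_s$; and by integrality, since $(n_j^*)^2\ge n_j^*$ for nonnegative integers, $\sum_j (n_j^*)^2\ge\sum_j n_j^*=N_s$. Thus $f_1(x^*_{ij})\ge\max\!\big(N_s^2/M_s,\,N_s\big)$.

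Finally I would combine the bounds in two regimes. When $N_s\ge M_s$, using $f_1(x^*_{ij})\ge N_s^2/M_s$ yields $\rho_{f_1}(I)\le b\,N_s/(N_s^2/M_s)-1=bM_s/N_s-1$, and since $b=\lceil N_s/M_s\rceil\le N_s/M_s+(M_s-1)/M_s$ one checks $bM_s/N_s\le 1+(M_s-1)/N_s\le 2$, so $\rho_{f_1}(I)\le 1$. When $N_s< M_s$ we have $b=1$, so the upper bound becomes $f_1(\hat x_{ij})\le N_s$, while the integrality bound gives $f_1(x^*_{ij})\ge N_s$, whence $\rho_{f_1}(I)\le 0\le 1$. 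In both regimes the claim follows.

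The step I expect to be the main obstacle is the lower bound together with the regime split: the Cauchy--Schwarz estimate $N_s^2/M_s$ is the natural one, but it degrades to a value smaller than $N_s$ exactly when $M_s> N_s$, where it is too weak to deliver the constant $1$; there one must instead rely on the integrality bound $f_1\ge N_s$, which dovetails with $b$ collapsing to $1$. A secondary subtlety to address carefully is that $x^*_{ij}$ denotes the optimum of the original ILP (\ref{optm}) rather than of $T(b)$; this causes no difficulty because both lower bounds are valid for every feasible assignment and therefore apply to $x^*_{ij}$ irrespective of the large weight $W$.
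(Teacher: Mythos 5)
Your proof is correct, and it reaches the bound by a genuinely more elementary route than the paper. Both arguments share the same sandwich structure (upper-bound the FDOS cost, lower-bound the optimal cost), but the paper obtains the two bounds from exact extremal characterizations: \thref{l2} computes the combinatorial minimum of $f_1$ over integer occupancies summing to $N_s$ (the balanced partition $M_sb^2-(2b-1)(M_sb-N_s)$), and \thref{l3} computes the maximum under the capacity $b$ (as many slots as possible filled to exactly $b$). The paper then divides these, argues the resulting ratio is decreasing in $N_s$, substitutes the extreme value $N_s=(b-1)M_s+1$, and checks the ratio is at most $2$ for $b\geq 2$, handling $b=1$ as a separate case via \thref{l1} (where in fact $f_1(\hat{x}_{ij})=f_1(x^*_{ij})$). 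You replace all of this with the cruder but sufficient bounds $f_1(\hat{x}_{ij})\leq bN_s$ (from $n_j^2\leq bn_j$ under the capacity constraint) and $f_1(x^*_{ij})\geq\max(N_s^2/M_s,\,N_s)$ (Cauchy--Schwarz plus integrality), and your two-regime split at $N_s\geq M_s$ versus $N_s<M_s$ closes the argument cleanly --- the arithmetic $bM_s/N_s\leq 1+(M_s-1)/N_s\leq 2$ checks out, and in the $N_s<M_s$ regime $b=1$ makes the ratio degenerate to $0$. Your approach buys brevity and self-containedness (it needs none of \thref{l1}--\thref{l3}, and it sidesteps the monotonicity-in-$N_s$ step that the paper asserts without detailed verification); the paper's approach buys tighter intermediate estimates, since its extremal lemmas would give a ratio strictly below $1$ for most parameter values, and \thref{l1} is in any case needed elsewhere (e.g.\ in \thref{cl3}). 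Your closing remark that the lower bounds apply to $x^*_{ij}$ of the original ILP (\ref{optm}) because they hold for every feasible assignment regardless of $\mathcal{A}$ is exactly the right point to make explicit.
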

\begin{proof}
	In \thref{l2}, we prove;
	\begin{align}\label{eneq1}
	f_1(x^*_{ij})\geq M_sb^2-(2b-1)(M_sb-N_s)
	\end{align}
	Further, in \thref{l3}, we prove, $$f_1(\hat{x}_{ij})\leq\Big\lfloor\frac{N_s}{b}\Big\rfloor b^2+\Big(N_s-\Big\lfloor\frac{N_s}{b}\big\rfloor b\Big)^2$$
	It is evident that $\big(N_s-\lfloor\frac{N_s}{b}\rfloor b\big)\leq b$. Thus, 
	\begin{align}\label{eneq2}
	f_1(\hat{x}_{ij})\leq\Big\lfloor\frac{N_s}{b}\Big\rfloor b^2+b^2\leq b(N_s+b)
	\end{align}
	By using the inequalities of eq. (\ref{eneq1}) and eq. (\ref{eneq2}) in eq. (\ref{bound}), we get:
	\begin{align}\label{rho}
	\rho_{f_1}(I)\leq \dfrac{M_sb(b-1)-N_s(b-1)+b^2}{M_sb^2-(2b-1)(M_sb-N_s)} ~~~~\forall I
	\end{align}
	Clearly, right hand side of eq. (\ref{rho}) is a decreasing function of $N_s$ as both $b$ and $M$ are positive. Further, since $b=\lceil\frac{N_s}{M_s}\rceil$, $N_s\geq (b-1)M_s+1$. Substituting $N_s=(b-1)M_s+1$ in eq. (\ref{rho}), we get
	\begin{align}
	\rho_{f_1}(I)\leq \dfrac{(M_s-1)(b-1)+b^2}{(M_s-1)(b-1)^2+b^2}~~~~\forall I
	\end{align}
	It can be easily proved that $\rho_{f_1}(I)\leq 1$ for $b\geq 2$.
	For $b=1$, \thref{l1} proves that in optimal assignment, every slot is occupied by at-most one ONU. Therefore, in optimal assignment (i.e ${x}^*_{ij}$), exactly $N_s$ number of slots are occupied by one ONU and rest of the slots remain unoccupied which is the case for $\hat{x}_{ij}$ as well. Thus, $f_1(\hat{x}_{ij})=f_1({x}^*_{ij})$ which proves \thref{cl2}. 
\end{proof}
\begin{claim}\thlabel{cl3}
		If solution of $T(\mathcal{U}_b^j)$ with $\mathcal{U}_b^j=b~\forall j$ is feasible then $f_2(\hat{x}_{ij})-f_2({x}^*_{ij})\geq 0$ where ${x}^*_{ij}$ is the optimal assignment of eq. (\ref{optm}) and $\hat{x}_{ij}$ is the assignment obtained by FDOS algorithm.
\end{claim}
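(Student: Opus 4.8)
The plan is to reduce the statement to showing that the optimal assignment $x^*_{ij}$ of eq.~(\ref{optm}) is itself a feasible point of the restricted transportation problem $T(\mathcal{U}_b^j)$ with $\mathcal{U}_b^j=b$ (eq.~(\ref{optm1})). Once this is established the conclusion is immediate: by definition $\hat{x}_{ij}$ maximizes $f_2$ over all assignments that are feasible for $T(b)$, so if $x^*_{ij}$ also lies in that feasible region then $f_2(\hat{x}_{ij})\geq f_2(x^*_{ij})$, which is exactly the desired inequality $f_2(\hat{x}_{ij})-f_2(x^*_{ij})\geq 0$. Thus the whole proof rests on verifying that the per-slot loads $n^*_j=\sum_i x^*_{ij}$ of the optimum never exceed $b$.

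First I would record that, because $W>\sum_{i,j}w_{ij}$ and $w_{ij}=jw_i\geq 0$, every feasible assignment satisfies $0\leq f_2\leq\sum_{i,j}w_{ij}<W$, while $f_1$ is integer-valued. Hence $x^*_{ij}$ must minimize $f_1$: if some feasible $x'$ had $f_1(x')\leq f_1(x^*)-1$, then $f(x')\leq Wf_1(x')\leq W(f_1(x^*)-1)<Wf_1(x^*)-f_2(x^*)=f(x^*)$, contradicting optimality. So $x^*$ attains the minimum of $f_1$ and, among all $f_1$-minimizers, the largest $f_2$.

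The core step is to show that every such $f_1$-minimizer obeys $n^*_j\leq b$ for all $j$, and here I would exploit the hypothesis that $T(b)$ is feasible by fixing a witness assignment $y$ with $n_j(y)\leq b$ for all $j$. Suppose, for contradiction, that $n^*_{j_0}\geq b+1$ for some slot $j_0$. Since $x^*$ and $y$ place the same total of $N_s$ ONUs and $n_{j_0}(y)\leq b<n^*_{j_0}$, integrality forces a slot $j'$ with $n^*_{j'}<n_{j'}(y)\leq b$, i.e.\ $n^*_{j'}\leq b-1$. I would then form the exchange digraph on the slots whose arcs record, for each ONU placed differently by $x^*$ and $y$, the move from its $x^*$-slot to its $y$-slot; as $j_0$ has positive net out-degree and $j'$ positive net in-degree, a flow decomposition yields a directed path $j_0=a_0\to a_1\to\cdots\to a_L=j'$. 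Re-routing the associated ONUs along this path keeps the assignment feasible (each ONU lands in its $y$-slot, hence in $\mathcal{A}$), decreases $n_{j_0}$ by one, increases $n_{j'}$ by one, and leaves all other loads unchanged, so that
\[
\Delta f_1 = 2\bigl(n^*_{j'}-n^*_{j_0}+1\bigr)\leq 2\bigl((b-1)-(b+1)+1\bigr)<0,
\]
contradicting the $f_1$-minimality of $x^*$. Therefore $n^*_j\leq b$ for all $j$, so $x^*$ is feasible for $T(b)$, and $f_2(\hat{x}_{ij})\geq f_2(x^*_{ij})$ as claimed.

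I expect the exchange/augmenting-path argument to be the main obstacle. The difficulty is that the feasibility set $\mathcal{A}$ can forbid arbitrary rebalancing of loads, so the reduction of $f_1$ cannot be obtained by a single swap but only through a chain of moves each individually licensed by the witness $y$; making the flow decomposition rigorous (a simple path, distinct ONUs on its arcs, and a well-defined re-routing) is where the care lies. The $f_1$-minimization step and the final deduction are routine once this separation argument is in hand.
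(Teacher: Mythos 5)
Your proposal is correct, and its skeleton is the same as the paper's: reduce \thref{cl3} to showing that the optimal assignment $x^*_{ij}$ of eq.~(\ref{optm}) satisfies $n^*_j\leq b$ for every slot, so that $x^*_{ij}$ lies in the feasible region of $T(\mathcal{U}_b^j)$ with $\mathcal{U}_b^j=b$ (which is the original feasible region plus that single load constraint), whence the $f_2$-maximality of $\hat{x}_{ij}$ over that region gives the inequality. The paper's proof of the claim is exactly this two-line reduction, delegating the load bound to \thref{l1}. Where you diverge is in how that load bound is established. The paper's \thref{l1} argues by contradiction using a closure set $S_o$ of slots reachable from an overloaded slot through feasible reassignments, and derives a counting contradiction with the feasibility of $T(b)$ (at least $b|S_o|+1$ ONUs confined to $|S_o|$ slots); your argument instead fixes an explicit feasibility witness $y$, takes the symmetric difference with $x^*$, and extracts an augmenting path by flow decomposition along which re-routing strictly decreases $f_1$. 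The two are morally the same exchange argument, but yours is more self-contained and makes rigorous the step the paper glosses over (``this process is continued for all slots of $S_o$''), at the cost of the extra bookkeeping of the exchange digraph; your preliminary observation that $W>\sum_{i,j}w_{ij}$ forces $x^*$ to be an exact $f_1$-minimizer is also a clean way to decouple the two objectives and is implicitly what the paper uses when it invokes the largeness of $W$.
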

\begin{proof}
	In $T(\mathcal{U}_b^j)$, only one extra constraint is added as compared to the actual optimization problem (refer eq. (\ref{optm})) and the constraint is $\sum\limits_{i=0}^{N_s-1}x_{ij}\leq b$. \thref{l1} proves that in optimal assignment, every slot satisfies this constraint which further proves 
	\thref{cl3}.
\end{proof} 
If the solution is not feasible then set of all slots are divided into two mutually exclusive non-empty sub-sets (non-emptiness will be proved in Section \ref{sssec:comple}) and the process continues until all ONUs get feasible assignment. Thus, the FDOS algorithm divides both $\mathbb{N}_s$ and $\mathbb{S}_n$ into some $K$ number of sub-sets, denoted by $S_p$ and $N_p$ respectively where $p\in \{1,2,\dots, K\}$, such that in optimal assignment, ONUs of $N_p$ can only be assigned to slots of $S_p$. In FDOS, assignment of $N_p$ in $S_p$ is performed by solving $T\Big(\dfrac{|N_p|}{|S_p|}\Big)$ (refer eq. (\ref{optm1})). If the optimal assignment and the assignment achieved by the FDOS algorithm of $N_p$ in $S_p$ is denoted by $xp^*_{ij}$ and $\hat{xp}_{ij}$ respectively then \thref{cl1} proves that 
\begin{align}\label{ene}
\dfrac{f(\hat{xp}_{ij})-f(xp^*_{ij})}{f(xp^*_{ij})}\leq 1~~~~ \forall 
p\in \{1,2,\dots,K\}
\end{align}
By using the inequality of eq. (\ref{ene}), in \thref{cl4}, we now prove that  FDOS is an $1$-approximation algorithm (i.e. $\rho_{f}(I)\leq 1~\forall I$). 
\begin{claim}\thlabel{cl4}
	FDOS ia an $1$-approximation algorithm.
\end{claim}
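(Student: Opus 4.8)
The plan is to combine the additive decomposition of the objective across the $K$ sub-problems with the per-sub-problem guarantee of \thref{cl1} via a mediant-type inequality. First I would record the two structural facts that the recursive splitting produces. By construction the sets $\{N_p\}_{p=1}^K$ and $\{S_p\}_{p=1}^K$ partition $\mathbb{N}_s$ and $\mathbb{S}_n$ so that, in the optimal assignment, every ONU of $N_p$ is assigned only to a slot of $S_p$. Since $f_1(x_{ij})=\sum_j n_j^2$ and $f_2(x_{ij})=\sum_{ij}w_{ij}x_{ij}$ are both sums over slots (equivalently over the feasible $(i,j)$ pairs) and each slot lies in exactly one block $S_p$, the objective $f$ is additive over the blocks. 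Hence the optimal solution of eq. (\ref{optm}), restricted to block $p$, must itself be optimal for the sub-problem on $(N_p,S_p)$ — otherwise one could strictly improve a single block without affecting the others — giving the generalization of eq. (\ref{sum}),
$$f(x^*_{ij})=\sum_{p=1}^K f(xp^*_{ij}),\qquad f(\hat{x}_{ij})=\sum_{p=1}^K f(\hat{xp}_{ij}),$$
where the second equality holds because the FDOS assignments $\hat{xp}_{ij}$ respect the same block separation and their concatenation is feasible for the full problem.

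Next I would set $a_p=f(\hat{xp}_{ij})-f(xp^*_{ij})$ and $b_p=f(xp^*_{ij})$. Two observations make the argument go through. Because $xp^*_{ij}$ minimizes the block-$p$ objective while $\hat{xp}_{ij}$ is merely feasible, we have $a_p\ge 0$; and eq. (\ref{ene}), which is exactly \thref{cl1} applied to block $p$, gives $a_p\le b_p$. Moreover each $b_p$ is strictly positive: with $W>\sum_{ij}w_{ij}$ and $f_1\ge 1$ on any non-empty block, the term $Wf_1$ dominates $f_2$, so $f(xp^*_{ij})>0$. Summing the decomposition and substituting into eq. (\ref{bound}),
$$\rho_f(I)=\frac{f(\hat{x}_{ij})-f(x^*_{ij})}{f(x^*_{ij})}=\frac{\sum_{p=1}^K a_p}{\sum_{p=1}^K b_p}.$$

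Finally I would invoke the elementary mediant inequality: if $0\le a_p\le b_p$ with $b_p>0$ for every $p$, then $\sum_p a_p\le\sum_p b_p$ while $\sum_p b_p>0$, so the ratio is at most $1$. This yields $\rho_f(I)\le 1$ for every instance $I$, which is precisely the statement that FDOS is a $1$-approximation algorithm.

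I expect the main obstacle to be rigorously justifying the additive decomposition $f(x^*_{ij})=\sum_p f(xp^*_{ij})$, i.e. that the optimum of the full ILP genuinely splits into the optima of the sub-problems. This hinges on the correctness of the set-generation step (Algorithm 2) guaranteeing that no optimal assignment ever routes an ONU of $N_p$ into a slot of another block, together with the non-emptiness and finite-termination facts deferred to Section \ref{sssec:comple}; once that block-separation property and the resulting additivity of $f$ are established, the remainder is the routine summation and mediant bound sketched above.
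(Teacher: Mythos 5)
Your proposal is correct and follows essentially the same route as the paper's own proof: decompose $f(x^*_{ij})$ and $f(\hat{x}_{ij})$ additively over the $K$ blocks produced by the recursive splitting, apply the per-block bound of eq.~(\ref{ene}) from \thref{cl1}, and sum. The only difference is that you spell out the step the paper leaves as ``easily proved'' --- namely the mediant inequality together with the observations that $a_p\ge 0$ and $b_p>0$ --- which is a welcome tightening rather than a departure.
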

\begin{proof}
	The FDOS algorithm divides both $\mathbb{N}_s$ and $\mathbb{S}_n$ into exactly $K$ number of sub-sets say $N_p$ and $S_p$ $\forall p\in\{1,2,\dots,K\}$ and the assignment of $N_p$ are performed only in $S_p$. Clearly, 
	\begin{align}\label{c1}
	f(\hat{x}_{ij})=\sum\limits_{i=1}^Kf(\hat{xp}_{ij})
	\end{align}
	Further, from eq. (\ref{sum}), we get 
	\begin{align}\label{c2}
	f({x}^*_{ij})=\sum\limits_{i=1}^Kf({xp}^*_{ij})
	\end{align}
	Applying eq. (\ref{c1}) and eq. (\ref{c2}) in eq. (\ref{bound}), we get:
	\begin{align}\label{c3}
	\rho_{f}(I)=\dfrac{\sum\limits_{i=1}^Kf(\hat{xp}_{ij})-\sum\limits_{i=1}^Kf({xp}^*_{ij})}{\sum\limits_{i=1}^Kf({xp}^*_{ij})}
	\end{align}
	By applying the inequality of eq. (\ref{ene}) in eq. (\ref{c3}), it can be easily proved that $\rho_{f}(I)\leq 1$ or in other word, FDOS is an $1$-approximation algorithm.
\end{proof}
Next, we explain the algorithm for generating the sets $\mathbb{L}$, $\mathbb{O}$, $\mathbb{N}_L$, and $\mathbb{N}_O$ and then we prove that in optimal assignment any ONU of $\mathbb{N}_L$ can not be assigned to any slot of $\mathbb{O}$ and any ONU of $\mathbb{N}_O$ can not be assigned to any slot of $\mathbb{L}$.

\subsection{{Generation of sets  ${\mathbb{L}}$, ${\mathbb{O}}$, ${\mathbb{N}_L}$, and ${\mathbb{N}_O}$ and their properties}} 
As we explained above, if the solution of eq. (\ref{optm1}) is infeasible then  $\mathbb{S}_n$ is divided into $\mathbb{L}$ and $\mathbb{O}$ (i.e $\mathbb{S}_n=\mathbb{L}\cup\mathbb{O}$) and $\mathbb{N}_s$ is divided into $\mathbb{N}_L$ and $\mathbb{N}_O$ (i.e $\mathbb{N}_s=\mathbb{N}_L\cup\mathbb{N}_O$). Since, solution of eq. (\ref{optm1}) is infeasible, our next step is to find feasible assignment for maximum number of ONUs. In order to do so, we modify eq. (\ref{optm1}) in the following ways. Firstly, we remove the arc set constraint (i.e $(i,j)\in \mathcal{A}$) which allows an assignment for all ONUs. Definitely, some assignments must be infeasible (i.e doesn't belong to $\mathcal{A}$). Further, we modify the cost factor, $w_{ij}$ by $w'_{ij}$ where  $w'_{ij}$ is given by eq. (\ref{weight}).
\begin{align}\label{weight}
w'_{ij}=\begin{cases}
w_{ij}~~~~&if~(i,j)\in\mathcal{A}\\
-H &otherwise 
\end{cases}
\end{align} 
It is quite evident that if $H>\sum\limits_{(i,j)\in\mathcal{A}} w_{ij}$ then we always get an feasible assignment for maximum number of ONUs. Thus, the modified version of eq. (\ref{optm1}) that we solve for getting assignment, is given by eq. (\ref{optm2}).
\begin{subequations}\label{optm2}
	\begin{alignat}{2}
	&\!\max_{i\in\mathbb{N}_s, j\in\mathbb{S}_n}        &\qquad&     \sum\limits_{i,j}w'_{ij}x_{ij}\label{eq:optProb2}\\
	&\text{subject to} &      & \sum\limits_{j=0}^{M-1}x_{ij}=1~~~~~~\forall i\in\mathbb{N}_s\label{eq:constraint21}\\ & & &\sum\limits_{i=0}^{N-1}x_{ij}\leq \mathcal{U}_b^j~~~~\forall j\in\mathbb{S}_n\label{eq:constraint22}
	\end{alignat}
\end{subequations}
Let, the us denote the assignment, achieved by solving eq. (\ref{optm2}), as $\bar{xm}_{ij}$. Clearly, $\bar{xm}_{ij}$ includes some assignments that $\notin \mathcal{A}$. Let, the set of all these ONUs as $\mathbb{U}_s$ and it is given by 
\begin{align}
\mathbb{U}_s=\{i|\bar{xm}_{ij}=1, (i,j)\notin \mathcal{A}\}
\end{align}
Let us form an assignment, say $\bar{x}_{ij}$, by removing assignments for all ONUs ($\in\mathbb{U}_s$) from $\bar{xm}_{ij}$ (i.e make $\bar{x}_{ij}=0~ \forall i\in \mathbb{U}_s, j\in\mathbb{S}_n$). Clearly, all assignments of $\bar{x}_{ij}$ are feasible and $\bar{x}_{ij}$ includes feasible assignments for maximum possible number of ONUs. Let $\mathbb{M}_j$ denotes the set of ONUs that are assigned to slot $j$ and is given by eq. (\ref{mj}).
\begin{align}\label{mj}
\mathbb{M}_j=\{i|\bar{x}_{ij}=1~~\forall i\in \mathbb{N}_s\}
\end{align}
\thref{cl5} proves that any ONU of $\mathbb{U}_s$ cannot be assigned to any slot for which $|\mathbb{M}_j|<\mathcal{U}_b^j$. Let us include all these slots into set $\mathbb{L}$. Thus, $\mathbb{L}$ is initialized by eq. (\ref{Lset}).
\begin{align}\label{Lset}
\mathbb{L}=\{j||\mathbb{M}_j|<\mathcal{U}_b^j~~\forall j\in \mathbb{M}_s\}
\end{align} 
\begin{claim}\thlabel{cl5}
	There doesn't exists any $i\in \mathbb{U}_s$ for which $(i,j)\in\mathcal{A}$ $\forall j$ $\ni |\mathbb{M}_j|<\mathcal{U}_b^j$.
\end{claim}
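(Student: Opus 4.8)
The plan is to argue by contradiction, exploiting the optimality of the assignment $\bar{xm}_{ij}$ for the modified transportation problem eq. (\ref{optm2}) together with the fact that the penalty $H$ is chosen with $H>\sum_{(i,j)\in\mathcal{A}}w_{ij}$. Suppose, for contradiction, that there is some $i\in\mathbb{U}_s$ and some slot $j$ with $(i,j)\in\mathcal{A}$ and $|\mathbb{M}_j|<\mathcal{U}_b^j$. Since $i\in\mathbb{U}_s$, in the optimal solution $\bar{xm}_{ij}$ the ONU $i$ sits in some slot $j_0$ with $(i,j_0)\notin\mathcal{A}$, so its current contribution to the objective of eq. (\ref{optm2}) is $w'_{ij_0}=-H$. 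I would then construct a feasible modification of $\bar{xm}_{ij}$ that strictly raises the objective, contradicting optimality.

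Write $c_j=\sum_{i'}\bar{xm}_{i'j}$ for the total number of ONUs (feasible and infeasible) placed in slot $j$. Constraint (\ref{eq:constraint22}) gives $c_j\le\mathcal{U}_b^j$, and by definition $|\mathbb{M}_j|\le c_j$. I split into two cases. If $c_j<\mathcal{U}_b^j$, the slot has genuine spare capacity, so I simply reassign $i$ from $j_0$ to $j$; this respects (\ref{eq:constraint22}) because $c_j+1\le\mathcal{U}_b^j$, it is admissible because $(i,j)\in\mathcal{A}$, and the objective changes by $w_{ij}-(-H)=w_{ij}+H>0$ since $w_{ij}=jw_i\ge 0$. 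If instead $c_j=\mathcal{U}_b^j$, then $|\mathbb{M}_j|<\mathcal{U}_b^j=c_j$ forces at least one ONU $i''$ in slot $j$ with $(i'',j)\notin\mathcal{A}$; here I swap, moving $i$ into slot $j$ and $i''$ into slot $j_0$. Since the arc constraint has been dropped in eq. (\ref{optm2}), $(i'',j_0)$ is admissible, the swap leaves both slot counts unchanged, and the objective changes by $(w_{ij}+H)+(w'_{i''j_0}+H)\ge w_{ij}+H>0$, using $w'_{i''j_0}\ge -H$. Note $i\neq i''$, as $i$ occupies $j_0\neq j$.

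In either case the new assignment is feasible for eq. (\ref{optm2}) and strictly improves its objective, contradicting the optimality of $\bar{xm}_{ij}$. Hence no such pair $(i,j)$ exists, which is precisely the assertion of \thref{cl5}.

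The step I expect to be the main obstacle is the case where slot $j$ is full by total count ($c_j=\mathcal{U}_b^j$) yet under-filled by feasible ONUs ($|\mathbb{M}_j|<\mathcal{U}_b^j$); the resolution is the swap with an infeasible occupant $i''$ of slot $j$, which is guaranteed to exist exactly in this regime. Everything else is bookkeeping: checking that each reassignment keeps every ONU on a single slot, preserves all capacity bounds, and that the large-$H$ choice makes $w_{ij}+H$ strictly positive so that removing even one $-H$ penalty yields a strict gain.
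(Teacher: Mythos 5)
Your proposal is correct and follows the same basic strategy as the paper: an exchange argument that contradicts the optimality of $\bar{xm}_{ij}$ for eq.~(\ref{optm2}), using the large penalty $H$ to guarantee a strict improvement. In fact your version is more complete than the paper's one-line argument, which simply "assigns ONU $i$ to slot $j$" without checking that slot $j$ has room under constraint~(\ref{eq:constraint22}); your second case (total occupancy $c_j=\mathcal{U}_b^j$ but $|\mathbb{M}_j|<\mathcal{U}_b^j$, resolved by swapping $i$ with an infeasibly placed occupant $i''$ of slot $j$) closes exactly the gap the paper glosses over, and the accounting $(w_{ij}+H)+(w'_{i''j_0}+H)>0$ is sound.
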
 
\begin{proof}
	We prove this claim by the method of contradiction. Suppose there exists an $(i,j)$ pair such that $i\in \mathbb{U}_s$, $|\mathbb{M}_j|<\mathcal{U}_b^j$ and $(i,j)\in \mathcal{A}$. Let in $\bar{xm}_{ij}$, ONU $i$ is assigned to slot $k$. Now consider another assignment, say $xm'_{ij}$, which is formed by assigning ONU $i$ to slot $j$. Clearly, $\sum\limits_{i,j}w'_{ij}\bar{xm}_{ij}<\sum\limits_{i,j}w'_{ij}xm'_{ij}$ which contradicts that $\bar{xm}_{ij}$ is the optimal solution of eq. (\ref{optm2}).
\end{proof} 
Further, \thref{cl6} proves that if there exists a slot in $\mathbb{S}_n\setminus\mathbb{L}$, say $j'$, such that an $(i,j)$ pair is present in $\mathcal{A}$ where $i\in \mathbb{M}_{j'}$ and $j\in\mathbb{L}$ then any ONU of $\mathbb{U}_s$ cannot be assigned to $j'$. Thus, if $j'$ is included in set $\mathbb{L}$ then also any ONU of $\mathbb{U}_s$ cannot be assigned to any slots of $\mathbb{L}$. 
All such slots are included to the $\mathbb{L}$ set and it is performed in the following ways. We go through all ONUs that are assigned to a slot of $\mathbb{S}_n\setminus\mathbb{L}$ in $\bar{x}_{ij}$ and check whether there exists an ONU, say $i$, which can also be assigned to a slot, say $j\in\mathbb{L}$ or in other word $(i,j)\in \mathcal{A}$. If such an ONU exists and it is assigned to slot $j'$ then include $j'$ in $\mathbb{L}$ (i.e $\mathbb{L}\cup j'$). This process will continue until there is no such ONU is present. All steps of generation of set $\mathbb{L}$ is shown in Algorithm 3. All other slots are included to the other set i.e $\mathbb{O}$. Thus, $\mathbb{O}$ is given by eq. (\ref{oset}).
\begin{align}\label{oset}
\mathbb{O}=\mathbb{S}_n\setminus \mathbb{L}
\end{align} 
\begin{claim}\thlabel{cl6}
	If there exists an ONU, say $i$, which is assigned to slot $j'\in \mathbb{S}_n\setminus\mathbb{L}$ in $\bar{x}_{ij}$ and there exists a slot, say $j\in\mathbb{L}$, such that $(i,j)\in \mathcal{A}$ then $(k,j')\notin \mathcal{A}$ $\forall k\in \mathbb{U}_s$.
\end{claim}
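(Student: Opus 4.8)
The plan is to argue by contradiction, exactly in the spirit of \thref{cl5}, but replacing the single exchange there with an augmenting (alternating) path through the partially built set $\mathbb{L}$. Suppose, contrary to the claim, that some $k\in\mathbb{U}_s$ satisfies $(k,j')\in\mathcal{A}$, where $i$ is the ONU of the hypothesis, assigned to $j'\in\mathbb{S}_n\setminus\mathbb{L}$ in $\bar{x}_{ij}$, and $j\in\mathbb{L}$ with $(i,j)\in\mathcal{A}$. First I would record the elementary fact that, since $\mathbb{L}$ always contains the initial set $\{j:|\mathbb{M}_j|<\mathcal{U}_b^j\}$ of eq. (\ref{Lset}) and occupancies never exceed capacity, every slot of $\mathbb{S}_n\setminus\mathbb{L}$, and in particular $j'$, is saturated, i.e. $|\mathbb{M}_{j'}|=\mathcal{U}_b^{j'}$; thus in $\bar{x}_{ij}$ the slot $j'$ holds $\mathcal{U}_b^{j'}$ feasibly assigned ONUs and (after removing $\mathbb{U}_s$) no parked ONU.

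The core of the argument is to show that any slot of $\mathbb{L}$ can be feasibly \emph{evacuated}. I would establish, by induction on the order in which slots enter $\mathbb{L}$, the following invariant: for every $j\in\mathbb{L}$ there is a finite chain of $\mathcal{A}$-feasible moves that displaces one ONU out of $j$ and cascades through slots of $\mathbb{L}$, terminating at a slot of the initial set with spare capacity, after which $j$ has a free position while the total number of feasibly assigned ONUs is unchanged. The base case is precisely the setting of \thref{cl5}: a slot with $|\mathbb{M}_j|<\mathcal{U}_b^j$ already has a free position, so the empty chain suffices. For the inductive step, a slot $j$ enters $\mathbb{L}$ only because some ONU $i'\in\mathbb{M}_j$ has $(i',j'')\in\mathcal{A}$ with $j''\in\mathbb{L}$ at an earlier stage; by the induction hypothesis $j''$ can be evacuated, freeing a position into which $i'$ moves, which in turn frees a position in $j$.

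With the invariant in hand the contradiction is immediate. Evacuate $j$ (freeing a position there without changing the feasible count), move $i$ from $j'$ into that freed position of $j$ (legal since $(i,j)\in\mathcal{A}$ and $j$ now has room), and finally assign $k$ to the position just vacated in the saturated slot $j'$ (legal since $(k,j')\in\mathcal{A}$ and $j'$ now has room). The evacuation chain stays inside $\mathbb{L}$ and therefore never touches $j'$, every capacity constraint of eq. (\ref{optm1}) still holds, every placed pair lies in $\mathcal{A}$, and the net effect is that one more ONU, namely $k$, is feasibly assigned than in $\bar{x}_{ij}$. Lifting this to eq. (\ref{optm2}) turns the penalty $-H$ on $k$ into the reward $w_{kj'}\ge 0$, while the feasible reassignments alter the objective by at most $\sum_{(a,b)\in\mathcal{A}}w_{ab}<H$ in magnitude, so the objective strictly increases. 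This contradicts the optimality of $\bar{xm}_{ij}$, equivalently the maximality of the feasible-assignment count of $\bar{x}_{ij}$, proving $(k,j')\notin\mathcal{A}$ for all $k\in\mathbb{U}_s$.

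I expect the main obstacle to be the evacuation invariant: making the alternating-path cascade precise, certifying that it always terminates at a genuinely spare slot of the initial set rather than cycling, and tracking the occupancy at every slot the cascade touches so that no capacity is violated mid-cascade. Once that bookkeeping is settled, the closing three-move exchange and the $H$-domination estimate are routine and parallel \thref{cl5}.
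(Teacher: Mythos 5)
Your proof is correct and follows the same overall skeleton as the paper's: assume some $k\in\mathbb{U}_s$ has $(k,j')\in\mathcal{A}$, build a modified assignment that feasibly places one more ONU, and contradict the optimality of $\bar{xm}_{ij}$ for eq.~(\ref{optm2}) via the $H$-domination estimate. The difference is in how the exchange is justified. The paper performs a bare two-move swap --- put $k$ into $j'$ and move $i$ into $j$ --- and asserts feasibility with a single ``clearly.'' That assertion is immediate only when $j$ has spare capacity, i.e.\ when $j$ belongs to the initial set of eq.~(\ref{Lset}); if $j$ was appended to $\mathbb{L}$ in a later pass of the while loop of Algorithm~2, it is saturated and the swap as written violates constraint~(\ref{eq:constraint22}). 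Your evacuation invariant, proved by induction on the order in which slots enter $\mathbb{L}$, is exactly the missing bookkeeping: the cascade follows slots of strictly earlier entry time, so it cannot cycle, it terminates at a genuinely under-filled slot of the initial set, and every intermediate move respects both $\mathcal{A}$ and the capacities. In short, your argument buys rigor where the paper's one-line feasibility claim has a gap, at the cost of the extra inductive machinery; the paper's version is shorter but only covers the base case of your induction. One small point worth stating explicitly in your write-up is that $j'\in\mathbb{S}_n\setminus\mathbb{L}$ is saturated (which you do note), since that is what makes the final insertion of $k$ into $j'$ depend on first removing $i$.
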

\begin{proof}
	We prove this claim by the method of contradiction. Suppose, there exists an ONU, say $k'\in \mathbb{U}_s$ which can be assigned to slot $j'$. Find another assignment, say $xm'_{ij}$, from $xm_{ij}$ by assigning ONU $k'$ to slot $j'$ and ONU $i$ to slot $j$. Clearly,  $xm'_{ij}$ is a feasible solution of eq. (\ref{optm2}) and $\sum\limits_{i,j}w'_{ij}\bar{xm}_{ij}<\sum\limits_{i,j}w'_{ij}xm'_{ij}$. This contradict that $xm_{ij}$ is the optimal assignment of eq. (\ref{optm2}).
\end{proof}
Let, $\mathbb{N}_L$ and $\mathbb{N}'_O$ denotes the set of ONUs that are assigned to sets $\mathbb{L}$ and $\mathbb{O}$ respectively. Further, let us generate a set $\mathbb{N}_O$ by $\mathbb{N}_O=\mathbb{N}'_O\cup \mathbb{U}_s$. All steps of generation of sets $\mathbb{L}$, $\mathbb{O}$, $\mathbb{N}_L$ and $\mathbb{N}_O$ are shown in Algorithm 2. In \thref{cl7} and \thref{cl8}, we prove that any ONU of $\mathbb{N}_L$ and $\mathbb{N}_O$ cannot be assigned of any slots of $\mathbb{O}$ and $\mathbb{L}$ respectively. 
\begin{claim}\thlabel{cl7}
	There doesn't exists any $(i,j)$ pair $\ni$ $(i,j)\in \mathcal{A}$ $\forall i\in \mathbb{N}_O$ and $j\in \mathbb{L}$.
\end{claim}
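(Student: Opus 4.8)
The plan is to prove \thref{cl7} by splitting $\mathbb{N}_O=\mathbb{N}'_O\cup\mathbb{U}_s$ and ruling out each part from $\mathbb{L}$ separately, since the two are excluded for genuinely different reasons. First I would dispose of the ``already feasibly assigned'' part $\mathbb{N}'_O$. By definition every $i\in\mathbb{N}'_O$ is assigned in $\bar{x}_{ij}$ to some slot $j'\in\mathbb{O}=\mathbb{S}_n\setminus\mathbb{L}$. The construction of $\mathbb{L}$ repeatedly scans the ONUs hosted in slots of $\mathbb{S}_n\setminus\mathbb{L}$ and, whenever such an ONU $i$ admits a feasible arc $(i,j)\in\mathcal{A}$ to some $j\in\mathbb{L}$, migrates the hosting slot $j'$ into $\mathbb{L}$; it halts only when no such ONU remains. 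Hence the stopping rule is exactly the assertion that no ONU assigned to a slot of $\mathbb{O}$ has a feasible arc into $\mathbb{L}$. This gives $(i,j)\notin\mathcal{A}$ for every $i\in\mathbb{N}'_O$ and every $j\in\mathbb{L}$ immediately, so the $\mathbb{N}'_O$ case needs nothing beyond the termination condition.

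Next I would handle the ``unassigned'' part $\mathbb{U}_s$ by an induction over the stages in which $\mathbb{L}$ is built. The invariant to carry is: \emph{at every stage $t$, no ONU of $\mathbb{U}_s$ has a feasible arc into the current set $\mathbb{L}_t$.} For the base stage $\mathbb{L}$ is initialised as $\{j:|\mathbb{M}_j|<\mathcal{U}_b^j\}$ (eq. (\ref{Lset})), and \thref{cl5} states precisely that no $i\in\mathbb{U}_s$ has a feasible arc into any of these under-filled slots, so the invariant holds initially. For the inductive step, suppose $\mathbb{L}_t$ satisfies the invariant and let $j'$ be the slot added at the next stage because some ONU $i$ hosted at $j'$ satisfies $(i,j)\in\mathcal{A}$ with $j\in\mathbb{L}_t$. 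Then \thref{cl6} applies and yields $(k,j')\notin\mathcal{A}$ for all $k\in\mathbb{U}_s$, so the augmented set $\mathbb{L}_{t+1}=\mathbb{L}_t\cup\{j'\}$ retains the invariant. Since the construction terminates in finitely many stages, the invariant holds for the final $\mathbb{L}$, establishing that no ONU of $\mathbb{U}_s$ has a feasible arc into $\mathbb{L}$.

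Combining the two parts exhausts $\mathbb{N}_O=\mathbb{N}'_O\cup\mathbb{U}_s$ and proves the claim. The main obstacle I anticipate is not a calculation but making the induction airtight: I must read the ``$\mathbb{L}$'' appearing in the statement of \thref{cl6} as the \emph{current} intermediate set $\mathbb{L}_t$ rather than the final one, and confirm that the exchange argument behind \thref{cl6} (re-routing $i$ into $j\in\mathbb{L}_t$ and $k$ into the freed slot $j'$) stays feasible with respect to the capacity bounds $\mathcal{U}_b^j$ as slots migrate from $\mathbb{O}$ into $\mathbb{L}$. I would therefore verify the invariant stage by stage, checking that each newly migrated slot preserves it, instead of invoking \thref{cl5} and \thref{cl6} only for the terminal configuration.
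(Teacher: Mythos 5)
Your proposal is correct and follows essentially the same route as the paper: the same decomposition $\mathbb{N}_O=\mathbb{N}'_O\cup\mathbb{U}_s$, with the $\mathbb{N}'_O$ part disposed of via the termination condition of Algorithm 2 (the paper phrases this as a contradiction: if $i\in\mathbb{N}'_O$ had a feasible arc into $\mathbb{L}$, its hosting slot $k$ would have been migrated into $\mathbb{L}$) and the $\mathbb{U}_s$ part handled by \thref{cl5} and \thref{cl6}. Your explicit stage-by-stage induction, and your caveat about reading the $\mathbb{L}$ in \thref{cl6} as the intermediate set and re-checking the capacity bounds of the exchange, is simply a more careful rendering of what the paper compresses into ``\thref{cl5} and \thref{cl6} jointly prove.''
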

\begin{proof}
 \thref{cl5} and \thref{cl6} jointly proves that there doesn't exists any $(i,j)$ pair $\ni$ $(i,j)\in \mathcal{A}$ $\forall i\in \mathbb{U}_s$ and $j\in \mathbb{L}$. Since, $\mathbb{N}_O=\mathbb{N}'_O\cup \mathbb{U}_s$, now, we have to prove that there doesn't exists any $(i,j)\in \mathcal{A}$ $\ni i\in \mathbb{N}'_O$ and $j\in \mathbb{L}$. 
 We prove this by the method of contradiction. Suppose there exists such an pair, say $(i,j)\in \mathcal{A}$ where $i\in \mathbb{N}'_O$ and $j\in \mathbb{L}$. Further, let in $\bar{x}_{ij}$, ONU $i$ is assigned to slot $k$ which implies $k\in \mathbb{O}$. Since, $(i,j)\in \mathcal{A}$ and $i$ in assigned to a slot of $\mathbb{O}$, $k$ should be included in set $\mathbb{L}$ (refer Algorithm 2) which contradicts that $k\in \mathbb{O}$ 
 
\end{proof}
\begin{claim}\thlabel{cl8}
	In optimal assignment of eq. (\ref{optm}), no ONU of $\mathbb{N}_L$ can be assigned to any slot of $\mathbb{O}$.
\end{claim}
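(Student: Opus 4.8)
The plan is to argue by contradiction, exploiting the optimality of $x^*_{ij}$ for the quadratic objective $f$ of eq. (\ref{optm}) together with the feasibility restriction already established in \thref{cl7}. Suppose the statement fails, so that in the optimal assignment there is a nonempty set $\mathcal{R}\subseteq\mathbb{N}_L$ of ONUs with $x^*_{ij}=1$ for some $j\in\mathbb{O}$. Since \thref{cl7} shows that every ONU of $\mathbb{N}_O$ admits feasible edges only into $\mathbb{O}$, constraint (\ref{eq:constraint1}) forces all of $\mathbb{N}_O$ into $\mathbb{O}$; hence under $x^*_{ij}$ the slots of $\mathbb{O}$ carry all $|\mathbb{N}_O|$ ONUs of $\mathbb{N}_O$ plus the $|\mathcal{R}|$ stray ONUs of $\mathcal{R}$, while $\mathbb{L}$ carries only $|\mathbb{N}_L|-|\mathcal{R}|$. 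Because $W>\sum_{i,j}w_{ij}$, it suffices to exhibit a feasible reassignment that strictly lowers $f_1$, since this alone lowers $f$ and contradicts optimality.

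First I would quantify the load imbalance this creates. The infeasibility of $T(\mathcal{U}_b^j)$ that triggered the split means $\mathbb{U}_s\neq\emptyset$, and by construction every slot of $\mathbb{O}$ is full in $\bar{x}_{ij}$ (all underfull slots were placed in $\mathbb{L}$ by eq. (\ref{Lset})), so with $\mathcal{U}_b^j=b$ we get $|\mathbb{N}_O|=b|\mathbb{O}|+|\mathbb{U}_s|>b|\mathbb{O}|$. Adding $\mathcal{R}$ on top gives $\sum_{j\in\mathbb{O}}n^*_j>b|\mathbb{O}|$, so some slot $j_h\in\mathbb{O}$ satisfies $n^*_{j_h}\geq b+1$. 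Dually, since $\bar{x}_{ij}$ already packs all of $\mathbb{N}_L$ into $\mathbb{L}$ with at most $b$ ONUs per slot and with at least one slot strictly below $b$, we have $|\mathbb{N}_L|\leq b|\mathbb{L}|-1$; therefore the $\mathbb{L}$-load $|\mathbb{N}_L|-|\mathcal{R}|$ under $x^*_{ij}$ leaves some slot $j_\ell\in\mathbb{L}$ with $n^*_{j_\ell}\leq b-1$.

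Next I would build the improving move. Each ONU of $\mathcal{R}\subseteq\mathbb{N}_L$ has, by the very definition of $\mathbb{N}_L$, a feasible edge into the $\mathbb{L}$-slot to which it was assigned in $\bar{x}_{ij}$, and the augmentation procedure that generates $\mathbb{L}$ makes $\mathbb{L}$ closed under the alternating chains leading from any of its slots back to an underfull slot. I would therefore pull one ONU of $\mathcal{R}$ out of the overloaded slot $j_h$ and, by shifting assignments along such an alternating chain inside $\mathbb{L}$, route the resulting extra unit into the slack at $j_\ell$, leaving every intermediate slot's load unchanged. The net change in $f_1=\sum_j n_j^2$ is then $2\bigl(n^*_{j_\ell}-n^*_{j_h}+1\bigr)\leq 2\bigl((b-1)-(b+1)+1\bigr)<0$, a strict decrease contradicting the optimality of $x^*_{ij}$ and forcing $\mathcal{R}=\emptyset$, which is exactly the claim.

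The main obstacle is precisely this last step. The alternating chain that guarantees $j_\ell$ is reachable from $\mathcal{R}$ is defined relative to the auxiliary assignment $\bar{x}_{ij}$, whereas the improving move must be carried out on the optimal assignment $x^*_{ij}$, whose per-slot loads need not match those of $\bar{x}_{ij}$. Making the argument airtight requires showing that the closure/reachability property of $\mathbb{L}$ transfers to $x^*_{ij}$, i.e. that the excess load sitting in $\mathbb{O}$ can always be pushed into the guaranteed slack of $\mathbb{L}$ through edges of $\mathcal{A}$, so that the per-unit decrease $2(n^*_{j_\ell}-n^*_{j_h}+1)<0$ is genuinely realizable rather than blocked by the arc set. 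This is a deficiency/Hall-type rerouting statement, and verifying it — not the arithmetic of $\Delta f_1$ — is where the real work lies.
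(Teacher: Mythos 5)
Your proposal has the right skeleton --- contradiction, the observation that $W$ large reduces everything to finding a feasible reassignment that lowers $f_1$ by at least $1$, and the load-counting that produces an overfull slot in $\mathbb{O}$ and an underfull slot in $\mathbb{L}$ --- but it does not constitute a proof, and you say so yourself: the entire content of the claim is concentrated in the rerouting step you leave unverified. The paper closes exactly this hole with two dedicated lemmas (\thref{l4} and \thref{l5}): \thref{l4} shows that whenever a feasible assignment leaves some slot of $\mathbb{O}$ with fewer than $b$ ONUs, one can reroute \emph{within} $\mathbb{O}$ along arcs of $\mathcal{A}$ to gain $f_1$-decrease of at least $2$, and \thref{l5} does the symmetric job for a slot of $\mathbb{L}$ holding more than $b$ ONUs. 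Both proofs hinge on the closure property built into Algorithm~2 (a slot is pulled into $\mathbb{L}$ as soon as any ONU assigned outside $\mathbb{L}$ has a feasible arc into $\mathbb{L}$) together with the existence of the witness assignment $\bar{x}_{ij}$, and this is precisely the ``deficiency/Hall-type rerouting statement'' you flag as the real work. Identifying where the difficulty sits is not the same as resolving it, so the proposal is incomplete.

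There is also a secondary structural flaw in how you set up the improving move. You propose to ``pull one ONU of $\mathcal{R}$ out of the overloaded slot $j_h$,'' but nothing in your counting places a stray ONU of $\mathcal{R}$ in $j_h$: the members of $\mathcal{R}$ may sit in slots of $\mathbb{O}$ of any load, and moving the excess from $j_h$ to wherever $\mathcal{R}$ actually sits again requires the unproven intra-$\mathbb{O}$ rerouting. The paper avoids this by working in the opposite order: it takes the stray ONU $i\in\mathbb{N}_L$ in slot $j\in\mathbb{O}$, moves it directly to a slot $k\in\mathbb{L}$ with $(i,k)\in\mathcal{A}$ (which exists by definition of $\mathbb{N}_L$), computes $f_1(x^*_{ij})-f_1(x'_{ij})=2(n_j^*-n_k^*-1)$, and then splits into the cases $n_j^*-n_k^*\in\{0,1\}$ where this direct swap is not yet an improvement, invoking \thref{l4} and \thref{l5} (plus the bounds $n_j^*\geq b$, $n_k^*\leq b$ that follow from those same lemmas applied to the optimal assignment) to manufacture the missing decrease of $2$. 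If you want to salvage your version, you would need to prove the transfer-of-reachability statement you describe in your final paragraph; as written, the argument assumes the conclusion of the two lemmas that carry the proof.
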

\begin{proof}
We prove this claim  by the method of contradiction. Suppose, $x^*_{ij}$ is the optimal assignment where there exists an ONU, say $i\in\mathbb{N}_L$, which is assigned to slot $j\in \mathbb{O}$. If we can find an assignment, say $\hat{x}_{ij}$ $\ni$ $f(\hat{x}_{ij})<f(x^*_{ij})$ then it contradicts the optimality of $x^*_{ij}$.
 As $W>\max\limits_{(i,j)\in\mathcal{A}}f_2(x_{ij})$, if we can prove $f_1(\hat{x}_{ij})\leq f_1(x^*_{ij})-1$ then it is sufficient to prove the claim.   
  Since, $i\in\mathbb{N}_L$, there must exists a slot, say $k\in\mathbb{L}$ $\ni$ $(i,k)\in \mathcal{A}$. Let, $x'_{ij}$ is an assignment which is achieved by assigning ONU $i$ in slot $k$ instead of slot $j$. Clearly, $f_1({x}^*_{ij})-f_1(x'_{ij})=2(n_j^*-n_k^*-1)$. If $n_j^*-n_k^*\geq 2$ then $f_1({x}^*_{ij})-f_1(x'_{ij})\geq 2$ and hence, $\hat{x}_{ij}$ can be chosen as $x'_{ij}$. We now find $\hat{x}_{ij}$ for other two cases: $n_j^*-n_k^*=1$, $n_j^*-n_k^*=0$.
  
 \subsubsection*{Case $n_j^*-n_k^*=1$ }
  In this case, $f_1({x}^*_{ij})=f_1(x'_{ij})$.
  Now, \thref{l4} and \thref{l5} prove that in $x^*_{ij}$, $n_{j}^*\geq b$ and $n_{k}^*\leq b$ respectively. Thus, in this case, there exist exactly two pairs of values of $n_j^*$, $n_k^*$ are feasible: $n_j^*=b$, $n_k^*=b-1$ and $n_j^*=b+1$, $n_k^*=b$. For the first case, in $x'_{ij}$, there exists an slot $j\in \mathbb{O}$ which is assigned with lower than $b$ number of ONUs. \thref{l4} proves that we can always find another assignment $x''_{ij}$ $\ni$ $f_1(x''_{ij})\leq f_1(x'_{ij})-2$. Similarly, using \thref{l5}, we can prove that for the second case also there exists such an $x''_{ij}$. So, in this case, we can choose $\hat{x}_{ij}$ as ${x}''_{ij}$. 
  \subsubsection*{Case $n_j^*-n_k^*=0$ }
   In this case, $f_1({x}^*_{ij})=f_1(x'_{ij})-2$ and there is only one feasible pair of values of $n_j^*$, $n_k^*$ is present: $n_j^*=b$, $n_k^*=b$. Since, in $x'_{ij}$, there exists an slot $j\in \mathbb{O}$ which is assigned with lower than $b$ number of ONUs \thref{l4} proves that we can always find another assignment $x''_{ij}$ $\ni$ $f_1(x''_{ij})\leq f_1(x'_{ij})-2$. Further, in $x''_{ij}$, there exists an slot $k\in \mathbb{L}$ which is assigned with more than $b$ number of ONUs and hence, \thref{l5} proves that we can always find another assignment $x'''_{ij}$ $\ni$ $f_1(x'''_{ij})\leq f_1(x''_{ij})-2$. Therefore, $f_1({x}'''_{ij})\leq f_1(x^*_{ij})-2$ and hence,  we can choose $\hat{x}_{ij}$ as ${x}'''_{ij}$.  
\end{proof}
\begin{algorithm}[t]
	\SetAlgoLined
	\KwIn{$\mathcal{N}$, $\mathcal{S}$}
	\textbf{Initialization:} $\mathbb{L}=\phi$\;
	
		\ForEach{$j\in \mathcal{S}$}
		{
			\If{$|\mathbb{M}_j|<\Big\lceil\dfrac{|\mathcal{N}|}{|\mathcal{S}|}\Big\rceil$}
			{ $\mathbb{L}=\mathbb{L}\cup j$\;}
		}
		%
		flag$=0$\;
		\While{flag$=0$}
		{
			flag$=1$\;
			\ForEach{$i\in \bigcup\limits_{j\in\mathcal{S}\setminus\mathbb{L}}\mathbb{M}_j$}
			{
				\ForEach{$ j'\in \mathbb{L}$}
				{
					\If{$(i,j')\in \mathcal{A}$}
					{
						 $\mathbb{L}=\mathbb{L}\cup j$, flag$=0$\;

					}
				}	
			}
		}
	 $\mathbb{O}=\mathcal{S}\setminus \mathbb{L}$, $\mathbb{N}_L=\bigcup\limits_{j\in\mathbb{L}}\mathbb{M}_j$, $\mathbb{N}_O=\bigcup\limits_{j\in\mathbb{O}}\mathbb{M}_j$;\ 		
	\caption{Generation of sets $\mathbb{L}$, $\mathbb{O}$, $\mathbb{N}_L$, and $\mathbb{N}_O$}
\end{algorithm} 
\subsection{Convergence Analysis}\label{sssec:comple}
Here, we prove that the FDOS algorithm converges. The FDOS algorithm assigns ONUs of set $\mathbb{N}_s$ in slots of set $\mathbb{S}_n$ and the first step is to solve the unbalance Transportation Problem (TP). If the solution is feasible then the algorithm terminates and otherwise, set $\mathbb{S}_n$ is divided into exactly two mutually exclusive sets $\mathbb{L}$ and $\mathbb{O}$. 
For these two sets $\mathbb{L}$ and $\mathbb{O}$ the same steps are followed. 
Thus, proving $\mathbb{L}\neq \emptyset$ and $\mathbb{O}\neq \emptyset$ are sufficient to prove the convergence of FDOS algorithm.  Since, $|\mathbb{N}_s|=N$ number of ONUs are assigned in $|\mathbb{S}_n|=M$ number of slots and the sets $\mathbb{L}$ and $\mathbb{O}$ are generated if at least one ONU remains unassigned (i.e. $\mathbb{U}_s\neq \Phi$), there must exist at least one slot which is assigned with less that $\Big\lceil\dfrac{N}{M}\Big\rceil$ number of ONUs (using Prison hole principal). This slot will be part of $\mathbb{L}$ which proves $\mathbb{L}\neq \emptyset$. Now, we prove that if there exist at least one $j$ $\ni$ $(i,j)\in \mathcal{A}$ $\forall i$ then $\mathbb{O}\neq \emptyset$. Suppose, the assignment of ONU $i$ is infeasible and $(i,k)\in \mathcal{A}$ then \thref{cl6} proves that $k\notin\mathbb{L}$ and hence, $k\in\mathbb{O}$ which proves $\mathbb{O}\neq \emptyset$  
\begin{figure*}[t]
	\centering
	\includegraphics[scale=.55]{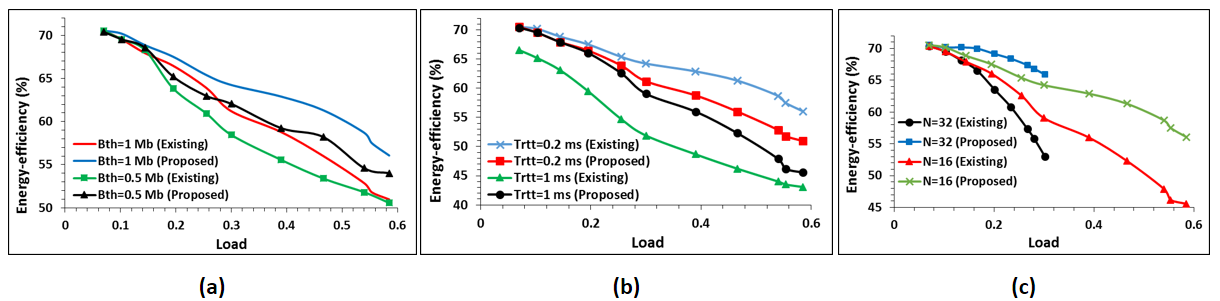}
	\caption{Energy-efficiency for different values of (a) $B_{th}^i$ (b) $T_{rtt}^i$ (c) $N$. Bth- $B_{th}^i$, Trtt- $T_{rtt}^i$, existing- results obtained from OSMP-EO protocol}
	\label{fig:energy}
\end{figure*}
\subsection{Complexity Analysis} 
Here, we analyze the asymptotic complexity of the FDOS algorithm. Let us consider that $N$ number of ONUs are assigned in $M$ number of slots. The first step is to assign ONUs by solving the Transportation Problem (TP). Let us denote the time complexity of solving TP as $C_T(m,n)$. In \cite{poly}, the authors has proposed an algorithm for solving TP with a complexity of $\mathcal{O}\big(M\log M(|\mathcal{A|}+N\log N)\big)$. If the solution is infeasible then slots and ONUs are divided in two sets $\mathbb{L}$, $\mathbb{O}$ and $\mathbb{N}_L$, $\mathbb{N}_O$ respectively using Algorithm 3. Generation of set $\mathbb{M}_j$ requires going through all ONUs requiring a complexity of $\mathcal{O}(N)$. The next step is to include all slots that are assigned with lower than $\Big\lceil\dfrac{N}{M}\Big\rceil$ number of ONUs to set $\mathbb{L}$ which requires going through all slots and hence, complexity is $\mathcal{O}(M)$. Next step is to go through all ONUs that are not assigned to any of the slots of $\in\mathbb{L}$ and for every ONU, check whether there exists any $k$ $\ni$ $(i,k)\in \mathcal(A)$. This process requires going through every slots for every ONUs requiring a complexity of $\mathcal{O}(MN)$ (refer inner two for loops in Algorithm 3). The  Algorithm enters into the while loop of Algorithm 3 at max $\min(M,N)$ times as in every iteration, at least one ONU and one slot get removed. Thus, generation of $\mathbb{L}$ amd $\mathbb{O}$ requires a time complexity of $\mathcal{O}\big(C_T(m,n)+\min(M,N)MN\big)$ which we denote as $C(M,N)$. The same step will be followed for both of these two sets. Thus, if $|\mathbb{L}|=M_1$ and $|\mathbb{N}_L|=N_1$ then $|\mathbb{O}|=M-M_1$ and $|\mathbb{N}_O|=N-N_1$ and hence, the time complexity is: $C(M_1,N_1)+C(M-M_1,N-N_1)$. It is quite evident that $C(M_1,N_1)+C(M-M_1,N-N_1)\leq C(M,N)$. Thus, in every stage, the complexity is upper bounded by: $C(M,N)$. As we have proved in Section \ref{sssec:comple} that every stage the cardinality of every sets (i.e all $\mathbb{L}$, $\mathbb{O}$, $\mathbb{N}_L$ and $\mathbb{N}_O$) are reduced at least by one as compared to their parent sets. Thus, the number of stages are upper bounded by $\min(M,N)$. Thus, the complexity of the FDOS algorithm is given by: $\min(M,N)C(M,N)$. 
\section{Results and discussions}
In this paper, we propose a protocol that is followed by the OLT for scheduling the wake-up message while the ONUs follows our previously proposed OSMP-EO protocol []. The wake-up scheduling allows the OLT to fairly distribute the active ONUs among cycles which provides improvement in energy-efficiency (refer Section \ref{intro}). In order to quantify this improvement, we compare energy-efficiency figures of the  OSMP-EO protocol with our proposed mechanism (i.e. OSMP-EO protocol with wake-up scheduling). We also compare the the average delay figure of both of them.  All results are generated from simulations, performed in OMNET++ for a network runtime of $50s$  and they are plotted with $95\%$  confidence interval.  Since, the OSMP-EO protocol is designed only for the US traffic, for fair comparison, in simulations, we consider only the US traffic.
The link rate of the feeder fiber and the maximum traffic arrival rate at each ONU are assumed to be $1$ Gbps and $100$ Mbps respectively \cite{chayan}. All packets are  Ethernet packets of size $1500$ Bytes. The traffic arrivals are considered to be self-similar which is generated by aggregating $16$ ON-OFF Pareto sources. The Hurst parameter ($H$) of the self similar traffic are considered as $0.8$. The buffer size of each ONUs are $1.2$ Mb ($100$ packets). Sleep-to-wake-up time of $ds$, $fs$, and $dz$ are considered to be $5.125~$ms, $125~\mu s$ and $1~\mu s$ respectively \cite{chayan}. Whereas power consumption of $ds$, $fs$, $dz$ and $on$ are $0.75$ W, $1.28$ W, $2.39$ W and $3.984$ W respectively \cite{chayan}. We consider the de-registration time as $50$ ms \cite{ieee2010ieee}.
\subsection{Energy-efficiency}
Here, we compare the energy-efficiency figures of our proposed mechanism with the same of the OSMP-EO protocol for different values of $B_{th}^i$, $T_{rtt}^i$, and $N$. 
\subsubsection{Effect of $B_{th}^i$} \label{sssec:bth}
Energy-efficiency figures of the OSMP-EO protocol and our proposed mechanism for $B_{th}^i=1$ Mb and $0.5$ Mb are plotted in Fig. \ref{fig:energy}(a). Here, we consider $T_{rtt}^i$ ($\forall i$) and $N$ as $0.2$ ms and $16$ respectively. It can be observed from the figure that a significant improvement in energy-efficiency can be achieved by fairly distributing ONUs among cycles for both  $B_{th}^i=1$ Mb and $0.5$ Mb. Further, an increment of $B_{th}^i$ improves the energy-efficiency figure as shown in Fig. \ref{fig:energy}(a). This is because of the following facts. The buffer fill-up time increases with an increase in $B_{th}^i$. As a result, ONUs wakes-up from sleep mode less frequently. We know that whenever an ONU wakes-up from a sleep mode $S_m$, on an average $T_{sw}^{S_m}+1.5T_{avg}$ ($T_{avg}$ denotes the average cycle time) duration is wasted (refer Section \ref{sssec:osmp}) when the ONUs are active but no US data is transmitted. Reduction of $B_{th}^i$ increases this waste and hence, the achieved energy-efficiency diminishes with a decrement of $B_{th}^i$. 
Further, an increment of traffic load also decreases the buffer fill-up time and hence, the number of slots, at which the active ONUs can be assigned, reduces. Thus, the possibility that in a slot, more number of ONUs wakes-up as compared to the others is quite high (i.e unfair assignment) and hence, the effect of fair distribution is more at higher load. Therefore, the improvement of energy-efficiency as compared to the OSMP-EO protocol enhances with the increment of traffic load as shown in Fig. \ref{fig:energy}(a). 

\subsubsection{Effect of $T_{rtt}^i$} \label{sssec:trtt}
Energy-efficiency figures of the OSMP-EO protocol and our proposed mechanism for $T_{rtt}^i=0.2$ ms (reach is $20$ km) and $1$ ms (reach is $100$ km) are plotted in Fig. \ref{fig:energy}(b). The considered values of $N$ and $B_{th}^i$ are $16$ and $1$ Mb respectively. It can be observed from Fig. \ref{fig:energy}(b) that the improvement that is achieved by employing our proposed mechanism as compared to the OSMP-EO protocol enhances with the increment of $T_{rtt}^i$ especially at low and medium load. This is because of the following facts. An increment of $T_{rtt}^i$ increases the duration of minimum value of polling cycles. At low and medium load the buffer fill-up time being quite high, in a cycle, only few ONUs are active. Thus, in OSMP-EO, a certain portion of a cycle is wasted when no US data is transmitted. This increases the duration of active periods of ONUs. However, in our proposed mechanism, the entire cycle is distributed among the active ONUs and as a result, this waste gets reduced. Further, we know every time an ONU wakes-up from sleep mode, on an average, $T_{sw}^{S_m}+1.5T_{avg}$ duration is wasted. An increase in $T_{rtt}^i$ increases $T_{avg}$ and hence, this waste. Thus, the energy-efficiency figure of our proposed mechanism enhances with a decrement of $T_{rtt}^i$ as shown in Fig. \ref{fig:energy}(b).  
\subsubsection{Effect of $N$}  \label{sssec:N}  
 Energy-efficiency figures of the OSMP-EO protocol and our proposed mechanism for $N=16$ and $32$ are plotted in Fig. \ref{fig:energy}(b). The considered values of  $B_{th}^i$ and $T_{rtt}^i$ ($\forall i$) are  $1$ Mb and $0.2$ ms respectively. An increment of $N$ increases the duration of polling cycles and hence, lesser number of cycles are allocated with more number of ONUs. As a results, the allocation is  mostly unfair. Therefore, the possibility of fair allocation of the active ONUs on cycles is more for a higher value $N$. Thus, the improvement of energy-efficiency as compared to OSMP-EO enhances with an increase in $N$ as shown in Fig. \ref{fig:energy}(c).     
 \begin{figure*}[h]
 	\centering
 	\includegraphics[scale=.55]{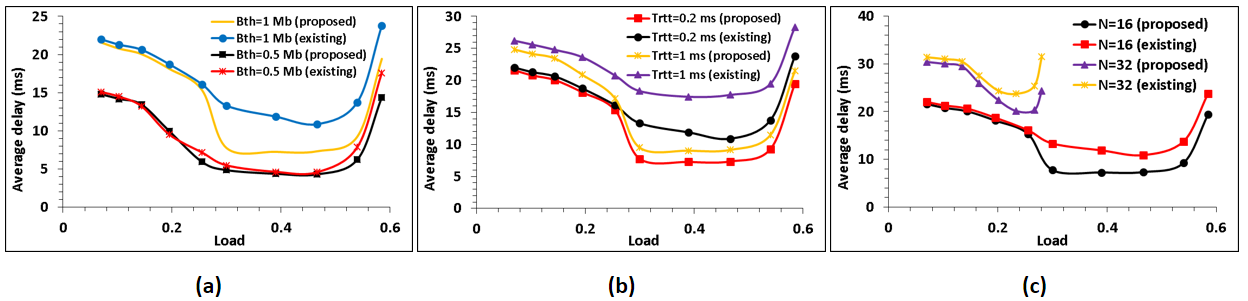}
 	\caption{Average delay for different values of (a) $B_{th}^i$ (b) $T_{rtt}^i$ (c) $N$. Bth- $B_{th}^i$, Trtt- $T_{rtt}^i$, existing- results obtained from OSMP-EO protocol}
 	\label{fig:delay}
 \end{figure*}
 \subsection{Average Delay} 
 Here, we observe the effect of $B_{th}^i$, $T_{rtt}^i$, and $N$ on average delay figures of our proposed mechanism. We also compare them with the average delay figures of the OSMP-EO protocol. In order to do so, in Fig. \ref{fig:delay}(a), Fig. \ref{fig:delay}(b), and Fig. \ref{fig:delay}(c), we have plotted the average delay figures of our proposed mechanism and the OSMP-EO protocol for different values of $B_{th}^i$ ($1$ Mb and $0.5$ Mb), $T_{rtt}^i$ ($0.2$ ms and $1$ ms), and $N$ ($16$ and $32$) respectively. All other considered parameters of Fig. \ref{fig:delay}(a), Fig. \ref{fig:delay}(b), and Fig. \ref{fig:delay}(c) are same as Fig. \ref{fig:energy}(a), Fig. \ref{fig:energy}(b), and Fig. \ref{fig:energy}(c) respectively. It can be observe from Fig. \ref{fig:delay} that at low load, the delay figures reduce with an increment of traffic load and after a entertain critical load, exactly the opposite trend can be seen. This is due to the following facts. The average delay figures have two components: (i) delay due to sleep duration, (ii) queuing delay during the active periods. At low load, the buffer fill-up times are quite high causing a higher value of sleep duration while the queuing delay is very small. Therefore, at low load, the average delay figures are mostly determined by sleep durations. Sleep durations and hence, average delay decreases with the increment of traffic load. Increment of traffic load decreases the sleep duration. Further, it increases the queuing delay as well. As a consequence, after a certain critical load, the queuing delay dominates the sleep duration. Hence, the average delay figures start increasing with an increment of traffic load after that critical value of traffic load. Another interesting observation is that the average delay figures of our proposed mechanism are much lower as compare to the same of the OSMP-EO protocol. This is because, in OSMP-EO, ONUs try to wake-up from sleep mode and initiate the US transmission just before their buffers fill up while in our proposed mechanism, the OLT forces ONUs to wake-up little early in order to fairly distribute them among cycles.   As discussed in Section \ref{sssec:bth}, increment of $B_{th}^i$ enhances the sleep duration and hence, the average delay as shown in Fig. \ref{fig:delay}(a). Further, the increment of both $T_{rtt}^i$ and $N$ increase the duration of cycles and hence, the duration of active periods for the case of OSMP-EO (refer Section \ref{sssec:trtt}). Thus, the queuing delay during active periods and hence, the average delay increases with the increment of both $T_{rtt}^i$ and $N$ as seen from Fig. \ref{fig:delay}(b) and Fig. \ref{fig:delay}(c) respectively. However, in our proposed mechanism, since the entire cycle is distributed among active ONUs, the increment of cycle time due to an increase in $T_{rtt}^i$ doesn't increase the active period significantly. Thus, the increment of average delay due an increase in $T_{rtt}^i$ is quite low as compared the OSMP-EO protocol. However, if the value of $N$ is increased then the number of active ONUs in a cycle is more which increases the duration of active periods and hence, the average delay as seen from Fig. \ref{fig:delay}(c).
 \section{Conclusion}
 In this paper, we proposed a mechanism for saving energy at ONUs in TDM-PON which can react to the instantaneous load change of both US and DS traffic at the same time. In order to do so, first, we proposed an architectural modification that allows the OLT to force ONUs waking-up from sleep mode. We then formulated an ILP for fairly distributing the active ONUs among cycles as fairly as possible while satisfying SLA of both US and DS traffic, which is followed by a wake-up scheduling protocol.
 A polynomial time $1$-approximation algorithm has been provided for solving this ILP. The convergence and the complexity analysis is also performed. Through extensive simulations, we demonstrated that the wake-up scheduling to facilitate a fair distribution of active ONUs among cycles provides a significant improvement in energy-efficiency. Further, the simulations depict that this improvement enhances with the increment of both the number of ONUs and the reach of PON network. Thus, the protocol is suitable for both dense and rural network. Also, the wake-up scheduling reduces the average delay figures. Thus, for the same delay figures, a larger value of buffer threshold can be chosen. We have also demonstrated that a larger buffer threshold  further improves energy-efficiency figures.        
\appendix
\begin{lemma}\thlabel{l1}
	If solution of $T(\mathcal{U}_b^j)$ with $\mathcal{U}_b^j=b~\forall j$ is feasible then $\sum\limits_{i=0}^{N_s-1}x^*_{ij}=n_j^*\leq b~\forall j$ where $x^*_{ij}$ is the optimal assignment.
\end{lemma}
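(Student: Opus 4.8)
The plan is to argue by contradiction, first reducing the statement to a purely combinatorial fact about an $f_1$-minimizer. Since $f_1$ is integer-valued and $0\le f_2(x)<W$ for every feasible $x$ (because $f_2(x)=\sum_{i,j}w_{ij}x_{ij}\le\sum_{i,j}w_{ij}<W$), lowering $f_1$ by one unit lowers $f(x_{ij})=Wf_1-f_2$ by more than $W-W=0$; hence the optimal $x^*$ of eq.~(\ref{optm}) must minimize $f_1(x)=\sum_j n_j^2$ over all assignments satisfying $\sum_j x_{ij}=1$ and $(i,j)\in\mathcal{A}$. The feasibility hypothesis on $T(\mathcal{U}_b^j)$ with $\mathcal{U}_b^j=b$ (eq.~(\ref{optm1})) supplies a reference assignment $\tilde{x}$ with $\tilde n_j\le b$ for all $j$, which I will use purely as a counting device. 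I then assume, for contradiction, that some slot $p$ has $n^*_p\ge b+1$.

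The naive repair — move one ONU from $p$ to a slot of load $\le b-1$, which exists because $\sum_j n^*_j=N_s\le bM_s$ — can violate the arc constraint $\mathcal{A}$, and \emph{this is the main obstacle}: the ONUs sitting in $p$ need not be admissible in any light slot. To handle it I would work with augmenting paths on the slot set: declare a slot $s'$ reachable from $s$ if there is a chain $s=s_0,i_1,s_1,\dots,i_k,s_k=s'$ in which each $i_\ell$ is assigned to $s_{\ell-1}$ under $x^*$ and $(i_\ell,s_\ell)\in\mathcal{A}$. Rerouting every $i_\ell$ from $s_{\ell-1}$ to $s_\ell$ leaves all intermediate loads unchanged and alters only the endpoints, so $\Delta f_1=2(n^*_{s'}-n^*_s+1)$, and this new assignment still respects $\mathcal{A}$ and the single-slot constraint. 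Consequently, if any slot reachable from $p$ had load $\le n^*_p-2$, the corresponding reroute would strictly decrease $f_1$, contradicting $f_1$-optimality; hence every slot reachable from $p$ has load $\ge n^*_p-1\ge b$.

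The decisive step is to let $R$ be the set of slots reachable from $p$ and observe a confinement property: every ONU assigned under $x^*$ to a slot of $R$ has \emph{all} of its $\mathcal{A}$-admissible slots inside $R$, since any admissible arc leading out of $R$ would extend a path and enlarge $R$. Writing $I_R$ for the ONUs placed in $R$ by $x^*$, confinement forces $I_R$ to be placed entirely within $R$ under $\tilde{x}$ as well. But $|I_R|=\sum_{q\in R}n^*_q\ge(b+1)+(|R|-1)b=|R|b+1$, whereas the $|R|$ slots of $R$ accommodate at most $|R|b$ ONUs under $\tilde x$ because $\tilde n_q\le b$. This is impossible, so no slot can carry more than $b$ ONUs in $x^*$, i.e.\ $n^*_j\le b$ for all $j$.

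I expect the augmenting-path/reachability construction, culminating in the clash between the confinement of $I_R$ and the capacity $|R|b$ of the reference assignment $\tilde{x}$, to be the genuine content of the proof; the reduction to $f_1$-minimization, the local $\Delta f_1$ computation, and the averaging remark are all routine.
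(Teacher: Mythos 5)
Your argument is correct and follows essentially the same route as the paper: assume a slot with load $\ge b+1$, build the closure of slots reachable from it by chains of admissible reassignments (your $R$ is the paper's $S_o$), show every slot in the closure carries at least $b$ ONUs since otherwise a chain-reroute would lower $f_1$ and hence $f$, and then derive the contradiction by counting the $\ge |R|b+1$ confined ONUs against the $|R|b$ capacity available to them in the feasible solution of $T(\mathcal{U}_b^j)$ with $\mathcal{U}_b^j=b$. Your write-up merely makes explicit the augmenting-path rerouting and the $\Delta f_1=2(n^*_{s'}-n^*_s+1)$ computation that the paper leaves implicit.
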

\begin{proof}
	We prove this claim by the method of contradiction. Suppose, $x^*_{ij}$ is the optimal assignment and there exists a slot, say $j'$, such that $n_{j'}^*\geq b+1$. By Pigeonhole Principal, there must exist at lease one slot which is assigned with lower than $b$ number of ONUs. Let the set of all slots that are assigned with lower than $b$ number of ONUs as $S_u$. If we can find another assignment, $x'_{ij}$ where $|\mathbb{M}_{j'}|$ reduces by $1$ and another slot, say $j''\in S_u$, exists $\ni$  $|\mathbb{M}_{j''}|\leq b-1$ while keeping $|\mathbb{M}_{j}|$ same for all other slot then $f_1(x^*_{ij})-f_1(x'_{ij})\geq 2$. As $W>N_sM_s\sum\limits_{i,j}w_{ij}$, $f(x'_{ij})-f(x^*_{ij})>0$ which contradict our previous assumption that $x^*_{ij}$ is the optimal assignment. Now, we prove that such an  $x'_{ij}$ always exists.
	
	Clearly, if there exist an $(i,j)\in \mathcal{A}$ $\ni$ $i\in \mathbb{M}_{j'}$ and $j\in S_u$ then $x'_{ij}$ can be found by assigning ONU $i$ to slot $j$ instead of $j'$. In other word, we can say if we form a set $S_o$ by including all feasible slots of all ONUs that are assigned to slot $j'$ and $S_o\cap S_u\neq \Phi$ then we can always find a better assignment $x'_{ij}$. This process is continued for all slots of $S_o$ until no more slots are included in $S_o$. Now, it is evident that if $S_o\cap S_u\neq \Phi$ then  a better assignment, $x'_{ij}$, can always be found. Thus, the claim is invalid if $S_o\cap S_u= \Phi$. If this is true then any ONU that is assigned to a slot of $S_o$ cannot be assigned to any slot of $S_u$. Since every slot of $S_o$ are assigned with at least $b$ number of ONUs and  there exists a slot $j'$ which is assigned with more that $b$ number of ONUs, at least $b|S_o|+1$ is assigned to $|S_o|$ number of slots. Consequently, there should not exists any feasible assignment where every slots are assigned with at most $b$ number of ONUs which contradicts that solution of $T(\mathcal{U}_b^j)$ with $\mathcal{U}_b^j=b~\forall j$ is feasible.  
\end{proof}
\begin{lemma}\thlabel{l2}
	If $N_s$ number of ONU  ($\in \mathbb{N}_s$) are assigned over $M_s$ number of slot  ($\in \mathbb{S}_n$) then $f_1(x_{ij})\geq k(b-1)^2+(M_s-k)b^2$ $\forall x_{ij}$ where $b=\lceil\frac{N_s}{M_s}\rceil$ and $k=M_sb-N_s$.
\end{lemma}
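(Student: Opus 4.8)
The plan is to recognise that the statement is purely combinatorial. Writing $n_j=\sum_i x_{ij}$ we have $f_1(x_{ij})=\sum_{j=0}^{M_s-1}n_j^2$, and the single-slot constraint of eq.~(\ref{const1}) forces $\sum_{j}n_j=N_s$ with each $n_j$ a nonnegative integer. Thus the lemma is exactly the assertion that $\sum_j n_j^2$ is minimised, over all ways of writing $N_s$ as an ordered sum of $M_s$ nonnegative integers, by the balanced distribution in which $M_s-k$ slots hold $b$ ONUs and $k$ slots hold $b-1$; indeed that configuration attains value $(M_s-k)b^2+k(b-1)^2$, which is precisely the claimed right-hand side. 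So it suffices to exhibit this as a lower bound valid for every integer vector $(n_0,\dots,n_{M_s-1})$ summing to $N_s$.

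First I would establish a uniform per-slot linear lower bound for the convex function $t\mapsto t^2$ at the integers. For any integer $n$, the quadratic $n^2-(2b-1)n+b(b-1)$ has roots $b-1$ and $b$, hence factors as $(n-b)(n-b+1)$, a product of two consecutive integers and therefore nonnegative for every integer $n$. Consequently
\begin{align}
n_j^2\;\geq\;(2b-1)\,n_j-b(b-1)\qquad\forall j.\nonumber
\end{align}
Geometrically this is just the secant line through the integer points $(b-1,(b-1)^2)$ and $(b,b^2)$; it is tight exactly when $n_j\in\{b-1,b\}$, which is why summing it will reproduce the bound with equality at the balanced configuration.

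Next I would sum this inequality over all $M_s$ slots and invoke $\sum_j n_j=N_s$, giving $f_1(x_{ij})\geq(2b-1)N_s-M_s\,b(b-1)$. The remaining step is the routine algebraic check that this matches the stated expression: substituting $k=M_sb-N_s$ yields $k(b-1)^2+(M_s-k)b^2=M_sb^2-k(2b-1)=(2b-1)N_s-M_s\,b(b-1)$, which closes the argument.

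I do not expect a genuine obstacle; the only real idea is selecting the correct supporting (secant) line for $n^2$ at $b-1$ and $b$ and noticing that its validity at \emph{all} integers follows from the consecutive-integer factorisation. An alternative, more conceptual but more tedious, route would be an exchange/smoothing argument: whenever two slot-counts differ by at least $2$, moving one ONU from the fuller slot to the emptier one strictly decreases $\sum_j n_j^2$, so the minimiser must have all counts within $1$ of each other, forcing the balanced distribution. That route then still requires separately verifying that the balanced distribution has $M_s-k$ slots of size $b$ and $k$ of size $b-1$; the factorisation argument avoids this bookkeeping, so it is the one I would write up.
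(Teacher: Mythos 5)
Your proof is correct, and it takes a genuinely different route from the paper's. The paper argues by contradiction with an exchange (smoothing) step: it first reduces to the unconstrained case where every $(i,j)$ pair is feasible, then supposes the balanced configuration is not the minimiser and uses the pigeonhole principle to find either a slot with at most $b-2$ ONUs or one with at least $b+1$, after which moving a single ONU strictly decreases $\sum_j n_j^2$ --- essentially the ``alternative, more tedious route'' you describe and set aside. Your secant-line argument, $n_j^2\geq(2b-1)n_j-b(b-1)$ via the factorisation $(n_j-b)(n_j-b+1)\geq 0$ for integers, is sharper in presentation: it is a pointwise bound valid for \emph{every} assignment (so the reduction to the unconstrained arc set $\mathcal{A}$ is not even needed, which matters since the lemma is stated $\forall x_{ij}$ and not just for the minimiser), it yields the closed form $(2b-1)N_s-M_sb(b-1)$ in one summation, and the algebraic identification with $k(b-1)^2+(M_s-k)b^2$ is a two-line check. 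What the paper's exchange argument buys in return is that the same smoothing move is reused later (\thref{l3}, \thref{l4}, \thref{l5} all rely on ``move one ONU and $f_1$ drops by at least 2''), so the contradiction proof keeps the appendix stylistically uniform; your approach would be a clean drop-in replacement for \thref{l2} specifically but does not generalise to those later lemmas, which genuinely need the exchange step.
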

\begin{proof}
	It is evident that the value $f_1(x_{ij})$ for any $\mathcal{A}$ cannot be lesser than the value $f_1(x_{ij})$ for a scenario where $\mathcal{A}$ includes all $(i,j)$ pair and hence, while proving \thref{l2}, we consider $(i,j)\in\mathcal{A}~\forall i\in \mathbb{N}_s,~j\in\mathbb{S}_n$. Here, we prove that $f_1(x_{ij})\geq k(b-1)^2+(M_s-k)b^2$ or in other words, the minimum value of $f_1(x_{ij})$ that can be achieved is when $k$ number of slots with $b-1$ number of ONUs and rest $M_s-k$ number of slots with $b$ number of ONUs where $k(b-1)+(M_s-k)b=N_s$.  We prove \thref{l2} by the method of contradiction. Suppose, the above mentioned assignment is not the best assignment of eq. (\ref{optm}). Let, the best assignment is $x'_{ij}$ which is different from the above discussed one i.e $f_1(x'_{ij})<f_1(x_{ij})~\forall x_{ij}$. By using Pigeonhole Principle, it can be claimed that there exist either a slot, say $q_1$, such that $\sum\limits_{i=0}^{N_s-1}x'_{iq_1}\leq b-2$ or a slot, say $q_2$, such that $\sum\limits_{i=0}^{N_s-1}x'_{iq_2}\geq b+1$. 
	\subsubsection*{Case $q_1$ exists}
	If $q_1$ exists then there should a at least one slot, say $q_3$, such that $\sum\limits_{i=0}^{N_s-1}x'_{iq_3}\geq b$. Now, consider an assignment say $x''_{ij}$ where one assigned ONU of slot $q_3$ is assigned to slot $q_1$ which is always possible since we consider all $(i,j)$ pair is in $\mathcal{A}$. It is easy to show that $f_1(x''_{ij})<f_1(x'_{ij})$ which contradicts our assumption that $f_1(x'_{ij})<f_1(x_{ij})~\forall x_{ij}$. 
	\subsubsection*{Case $q_2$ exists}
	If $q_2$ exists then there exist at least one slot, say $q_4$, such that $\sum\limits_{i=0}^{N_s-1}x'_{iq_4}\leq b-1$. If an ONU of slot $q_2$ is assigned to slot $q_4$ then for this new assignment, $x'''_{ij}$, $f_1(x'''_{ij})<f_1(x'_{ij})$ and it contradicts our initial consideration i.e $f_1(x'_{ij})<f_1(x_{ij})~\forall x_{ij}$.
	This proves \thref{l2}.     
\end{proof}
\begin{lemma}\thlabel{l3}
	If the assignment of $N_s$ number of ONUs ($\in \mathbb{N}_s$) in $M_s$ number of slots ($\in \mathbb{S}_n$) by solving $T(\mathcal{U}_b^j)$ with $\mathcal{U}_b^j=\lceil\frac{N_s}{M_s}\rceil=b~\forall j$ is feasible and the achieved assignment is $\hat{x}_{ij}$ then $f_1(\hat{x}_{ij})\leq\lfloor\frac{N_s}{b}\rfloor b^2+\big(N_s-\lfloor\frac{N_s}{b}\rfloor b\big)^2$.
\end{lemma}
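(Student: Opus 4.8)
The plan is to observe that \thref{l3} is purely a statement about feasible assignments subject to the capacity constraint $n_j = \sum_i x_{ij} \le b$, and does not actually use the optimality of $\hat{x}_{ij}$ for the objective $f_2$. Indeed, since $\hat{x}_{ij}$ is a feasible solution of $T(\mathcal{U}_b^j)$ with $\mathcal{U}_b^j = b$, it satisfies $\hat{n}_j := \sum_i \hat{x}_{ij} \le b$ for every slot $j$, together with $\sum_j \hat{n}_j = N_s$ (each ONU occupies exactly one slot). Recalling that $f_1(x_{ij}) = \sum_j n_j^2$, it therefore suffices to upper bound $\sum_j n_j^2$ over all integer vectors $(n_1,\dots,n_{M_s})$ with $0 \le n_j \le b$ and $\sum_j n_j = N_s$; dropping the arc-set constraint $(i,j)\in\mathcal{A}$ only enlarges this feasible region, so any bound obtained there remains valid for $\hat{x}_{ij}$.

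First I would establish that, in contrast to \thref{l2}, the sum of squares under a cap is \emph{maximised} by the most concentrated configuration. This follows from an exchange argument driven by the convexity of $t \mapsto t^2$: if two slots $j_1, j_2$ satisfy $0 < n_{j_1} \le n_{j_2} < b$, then moving one ONU from $j_1$ to $j_2$ remains feasible and changes $\sum_j n_j^2$ by $(n_{j_1}-1)^2 + (n_{j_2}+1)^2 - n_{j_1}^2 - n_{j_2}^2 = 2(n_{j_2}-n_{j_1}) + 2 \ge 2 > 0$, strictly increasing the objective. Hence any maximiser admits no such pair, i.e. at most one slot carries a value strictly between $0$ and $b$, while all others equal $0$ or $b$.

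Next I would read off the extremal configuration. Writing $p$ for the number of slots at value $b$, one residual slot at value $r$ with $0 \le r < b$, and the rest empty, the constraint $\sum_j n_j = N_s$ forces $N_s = pb + r$, whence $p = \lfloor N_s/b\rfloor$ and $r = N_s - \lfloor N_s/b\rfloor b$. Feasibility of $T(\mathcal{U}_b^j)$ guarantees $\lceil N_s/b\rceil \le M_s$, so this configuration fits within the $M_s$ slots. Evaluating the objective gives $\sum_j n_j^2 = p b^2 + r^2 = \lfloor N_s/b\rfloor b^2 + \big(N_s - \lfloor N_s/b\rfloor b\big)^2$, which is exactly the claimed maximum.

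Finally, since $\hat{x}_{ij}$ is one feasible point of this region, $f_1(\hat{x}_{ij})$ cannot exceed this maximum, establishing \thref{l3}. The only delicate point is the characterisation of the maximiser: because the feasible set is finite and each exchange strictly increases the objective, any maximiser must be ``stuck'' and therefore of the concentrated form above --- the mirror image of the balancing argument that produces the lower bound in \thref{l2}.
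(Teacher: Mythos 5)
Your proof is correct and follows essentially the same route as the paper's: both drop the arc-set constraint, use only the capacity bound $n_j\le b$, and run an exchange argument based on the convexity of $t\mapsto t^2$ to conclude that the capped sum of squares is maximised by the concentrated configuration with $\lfloor N_s/b\rfloor$ slots at $b$ and one residual slot. If anything, your version is slightly more careful, since you require the donor slot to be nonempty ($0<n_{j_1}$) and verify that the extremal configuration fits in $M_s$ slots, points the paper's pigeonhole step glosses over.
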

\begin{proof}
	Similar to \thref{l2}, we prove \thref{l3} for a scenario where $(i,j)\in\mathcal{A}~\forall i\in \mathbb{N}_s,~j\in\mathbb{S}_n$. Here, we have to prove $f_1(\hat{x}_{ij})\leq\lfloor\frac{N_s}{b}\rfloor b^2+\big(N_s-\lfloor\frac{N_s}{b}\rfloor b\big)^2$ or in other words, the maximum value that $f_1(\hat x_{i,j})$ can attain by FDOS algorithm, is: $b$ number of of ONUs are assigned to $\lfloor\frac{N_s}{b}\rfloor$ number of slots and the remaining ONUs are assigned to one slot. We prove this lemma by the method of contradiction. For doing so, we assume the above mentioned assumption is not correct. Let, the worst possible assignment is $x'_{ij}$ i.e $f_1(x'_{ij})\geq f_1(x_{ij})$ $\forall x_{ij}$ and $x'_{ij}$ is different from the above mentioned assignment. In the assignment $x'_{ij}$, there must exist at least two slots, say $j'$ and $j''$, where $\sum\limits_{i=0}^{N_s}x'_{ij'}<b$ and  $\sum\limits_{i=0}^{N_s}x'_{ij''}<b$ (use Pigeonhole principal). Without loss of generality, let us assume $\sum\limits_{i=0}^{N_s}x'_{ij'}\geq \sum\limits_{i=0}^{N_s}x'_{ij''}$. Now, we consider another assignment, say $x''_{ij}$, which is achieved by assigning any one ONU of slot $j''$ to slot $j'$. It is now very easy to prove that $f_1(x''_{ij})>f_1(x'_{ij})$ which contradicts our our previous assumption that $f_1(x'_{ij})\geq f_1(x_{ij})$ $\forall x_{ij}$. This proves \thref{l3}.
\end{proof}
\begin{lemma}\thlabel{l4}
	Suppose, in FDOS, the set of slots $\mathcal{S}$, where ONUs of $\mathcal{N}$ are assigned, is divided into sets $\mathbb{L}$ and $\mathbb{O}$ respectively. If $x'_{ij}$ is a feasible assignment of the original optimization problem of eq. (\ref{optm}) and there exists a slot, say $p\in\mathbb{O}$, $\ni$ $|\mathbb{M}_p|< \dfrac{|\mathcal{N}|}{|\mathcal{S}|}=b$ then there always exists another feasible assignment $x''_{ij}$ $\ni$ $f_1(x'_{ij})-f_1(x''_{ij})\geq 2$.
\end{lemma}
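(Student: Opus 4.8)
\emph{Plan.} The object to control is $f_1(x_{ij})=\sum_j n_j^2$, where $n_j=\sum_i x_{ij}$ is the occupancy of slot $j$. The elementary fact I would build on is that a \emph{chain} of feasible single-ONU reassignments $s_1\!\to s_2\!\to\cdots\to s_k\!\to p$ (each step moving one ONU sitting in the current slot into the next, every step in $\mathcal{A}$) leaves all intermediate occupancies unchanged and only decrements $n_{s_1}$ by one while incrementing $n_p$ by one; it therefore changes $f_1$ by exactly $2(n_p+1-n_{s_1})$, which is $\le -2$ precisely when $n_{s_1}\ge n_p+2$. Hence the whole lemma reduces to exhibiting \emph{one} such chain ending at the underfull slot $p$ whose source slot is overfull, i.e.\ $n_{s_1}\ge n_p+2$; the resulting assignment is the required $x''$.

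To locate the source I would first recall from \thref{cl7} that no ONU of $\mathbb{N}_O$ is feasible for any slot of $\mathbb{L}$, so under the feasible assignment $x'$ every ONU of $\mathbb{N}_O$ lies in a slot of $\mathbb{O}$. I then form the reachability set $R'\subseteq\mathbb{O}$ consisting of $p$ together with every slot from which a chain of feasible $\mathbb{N}_O$-reassignments can deliver an ONU to $p$ (equivalently, the slots that reach $p$ in the directed graph whose edge $s\to r$ records an $\mathbb{N}_O$-ONU sitting in $s$ with $(\cdot,r)\in\mathcal{A}$). By construction $R'$ is \emph{closed}: an $\mathbb{N}_O$-ONU sitting outside $R'$ cannot be feasible for any slot of $R'$, for otherwise its slot would itself reach $p$. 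Consequently the $\mathbb{N}_O$-ONUs feasible for $R'$ are exactly those currently assigned inside $R'$, and their number equals $\sum_{u\in R'}m_u$, where $m_u$ counts the $\mathbb{N}_O$-ONUs in slot $u$ under $x'$.

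The crux, and the step I expect to be the main obstacle, is forcing $R'$ itself to contain an overfull slot: a bare pigeonhole only guarantees an overfull slot \emph{somewhere} in $\mathbb{O}$, not one that can actually reach $p$. I would pin it down by comparing against the maximum-feasibility assignment $\bar{x}$ used to generate $\mathbb{L}$ and $\mathbb{O}$. Suppose, for contradiction, that every slot of $R'$ has occupancy at most $n_p+1$ under $x'$. Isolating the term for $p$ and using $n_p\le b-1$ gives $\sum_{u\in R'}m_u\le\sum_{u\in R'}n_u\le (b-1)+b(|R'|-1)=b|R'|-1$. On the other hand, every slot of $\mathbb{O}$, hence every slot of $R'\subseteq\mathbb{O}$, is saturated by $\bar{x}$ to exactly $b$ ONUs of $\mathbb{N}'_O\subseteq\mathbb{N}_O$, and each of those $b|R'|$ distinct ONUs is feasible for the $R'$-slot that holds it; thus at least $b|R'|$ distinct $\mathbb{N}_O$-ONUs are feasible for $R'$, giving $\sum_{u\in R'}m_u\ge b|R'|$. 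This contradicts the previous bound, so some $s_1\in R'$ must satisfy $n_{s_1}\ge n_p+2$.

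Finally, chaining from such an $s_1$ down to $p$ (the chain exists by the definition of $R'$ and stays inside $\mathbb{O}$, so every reassignment is feasible) produces a feasible $x''$ with $f_1(x'')=f_1(x')+2(n_p+1-n_{s_1})\le f_1(x')-2$, which is exactly the claim. Before finalizing I would verify the bookkeeping that each intermediate occupancy is genuinely restored and check the degenerate case $R'=\{p\}$, where the $\bar{x}$ count already forces $m_p\ge b$ and hence $n_p\ge b$, contradicting $n_p<b$; this merely confirms that $R'$ always extends beyond $p$, so it is subsumed by the general inequality above once the reachability/$\bar{x}$ comparison is in place.
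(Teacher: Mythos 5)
Your proof is correct and follows essentially the same route as the paper's: an augmenting chain of feasible single-ONU reassignments ending at the underfull slot $p$, a closure argument on the set of slots that can reach $p$, and a contradiction drawn from the fact that $\bar{x}_{ij}$ saturates every slot of $\mathbb{O}$ with exactly $b$ ONUs of $\mathbb{N}_O$. Your explicit count ($\sum_{u\in R'} m_u \ge b|R'|$ against $\le b|R'|-1$) merely makes rigorous the step the paper asserts as ``there doesn't exist any assignment with $|\mathbb{M}_j|\ge b$ for all $j$.''
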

\begin{proof}
	Let ${x}'_{ij}$ is a feasible assignment of eq. (\ref{optm2}). Further, let us assume $O_1$ and $O_2$ denotes the set of slots ($\in \mathbb{O}$) that are assigned with less that $b$ and more the $b$ number of ONUs respectively in $x'_{ij}$. It can be noted that every slot of $\mathbb{O}$ is assigned with $b$ number of ONUs (i.e. $\mathbb{N}'_O=b|\mathbb{O}|$). 
	As $\mathbb{N}_O=\mathbb{N}'_O\cup\mathbb{U}_s$, $|\mathbb{N}_O|\geq b|\mathbb{O}|$. 	
	In \thref{cl7}, we prove that all ONUs of $\mathbb{N}_O$ can only be assigned to slots of $\mathbb{O}$. Since, in  $x'_{ij}$, the slot $p$ is assigned with lesser that $b$ number of ONUs (i.e. $p\in O_1$), there must exist at-least one slot, say $q$, $\ni$ $|\mathbb{M}_q|>b$ (i.e. $q\in O_2$). 
	Let, the ONUs that are assigned to $O_1$ and $O_2$ are denoted by $N_{O_1}$ and $N_{O_2}$ respectively. 
	
	If there exists an $(i,j)$ pair in $\mathcal{A}$ $\ni$ $i\in N_{O_2}$ and $j\in O_1$ and in $x'_{ij}$, $i$ is assigned to slot $k$ then find $x''_{ij}$ by assigning ONU $i$ to slot $j$ and remove its assignment to $k$. Clearly, in $x''_{ij}$, $|\mathbb{M}_k|$ will reduce by one while $|\mathbb{M}_j|$ will increase by one. For all other slots exactly same number of ONUs are assigned and in this case, $f_1(x'_{ij})-f_1(x''_{ij})=2(|\mathbb{M}_k|-|\mathbb{M}_j|-1)$. Since $|\mathbb{M}_k|-|\mathbb{M}_j|\geq 2$,  $f_1(x'_{ij})-f_1(x''_{ij})\geq 2$. 
	Suppose, this is not the case but an ONU $i'$ exists, which is assigned to a slot $k'$ $\ni$ $|\mathbb{M}_{k'}|=b$ and an $(i',j')$ exists in $\mathcal{A}$ where $j'\in O_1$. 
	If there exists any ONU, say $i''\in N_{O_2}$, $\ni$ $(i'',k')\in\mathcal{A}$ then find $x''_{ij}$  by assigning $i''$ to $k'$ and $i'$ to $j'$ while removing the previous assignment of $i'$ and $i''$. Clearly, in this case also $f_1(x'_{ij})-f_1(x''_{ij})\geq 2$. Therefore, even if $k'$ is included in $O_1$, we can still claim that if there exists an $(i,j)\in \mathcal{A}$ $\ni$ $i \in N_{O_2}$ and $j\in O_1$ then we can always find another assignment, say $x''_{ij}$, $\ni$ $f_1(x'_{ij})-f_1(x''_{ij})\geq 2$. This process (similar to creation of set $\mathbb{L}$ in Algorithm 2) will continue until no slot can be included to ${O_1}$. Now, include all ONUs that are assigned to ${O_1}$ in set $N_{O_1}$ and hence, $|N_{O_1}|<b|O_1|$.  
	
	From the above discussion it is now clear that \thref{l4} doesn't hold true only if there exists no $(i,j)\in \mathcal{A}$ $\ni$ $i\in N_{O_2}$ and $j\in O_1$. If this is true then there doesn't exist any assignment $\ni$ $|M_j|\geq b ~\forall j$.  	
		However, we know that $\bar{x}_{ij}$ is an assignment where every slots of $\mathbb{O}$ and hence, $O_1$ are assigned with $b$ or more number of ONUs which contradicts our previous claim and hence, $f_1(x'_{ij})-f_1(x''_{ij})\geq 2$.     
\end{proof}
\begin{lemma}\thlabel{l5}
	Suppose, in FDOS, the set of slots $\mathcal{S}$, where ONUs of $\mathcal{N}$ are assigned, is divided into sets $\mathbb{L}$ and $\mathbb{O}$ respectively. If $x'_{ij}$ is a feasible assignment of the original optimization problem and a slot, say $p\in\mathbb{L}$, $\ni$ $|\mathbb{M}_p|> \dfrac{|\mathcal{N}|}{|\mathcal{S}|}=b$ then there always exists another feasible assignment $x''_{ij}$ $\ni$ $f_1(x'_{ij})-f_1(x''_{ij})\geq 2$.
\end{lemma}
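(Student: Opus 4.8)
The plan is to prove \thref{l5} as the exact mirror image of \thref{l4}, swapping the roles of over-filled and under-filled slots and working inside $\mathbb{L}$ rather than $\mathbb{O}$; the one genuinely new ingredient is a capacity estimate for $\mathbb{L}$. First I would record that, by \thref{cl7}, in any feasible assignment every ONU occupying a slot of $\mathbb{L}$ must belong to $\mathbb{N}_L$. In the reference assignment $\bar{x}_{ij}$ each slot of $\mathbb{L}$ is loaded with at most $b$ ONUs, and the initially chosen slots of $\mathbb{L}$ (those with $|\mathbb{M}_j|<b$) are loaded with strictly fewer than $b$; summing over $\mathbb{L}$ gives $|\mathbb{N}_L|<b|\mathbb{L}|$. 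Consequently, under $x'_{ij}$ the slots of $\mathbb{L}$ jointly carry at most $|\mathbb{N}_L|<b|\mathbb{L}|$ ONUs, so the presence of the over-filled slot $p$ (more than $b$ ONUs) forces, by counting, at least one slot of $\mathbb{L}$ carrying fewer than $b$ ONUs. I would write $L_2$ and $L_1$ for the slots of $\mathbb{L}$ loaded with more than $b$ and fewer than $b$ ONUs respectively under $x'_{ij}$, so that $p\in L_2$ and $L_1\neq\emptyset$.

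Next I would produce the improved assignment $x''_{ij}$. If some ONU $i$ assigned to a slot $k\in L_2$ admits a feasible target $j\in L_1$, i.e. $(i,j)\in\mathcal{A}$, then moving $i$ from $k$ to $j$ gives $f_1(x'_{ij})-f_1(x''_{ij})=2(|\mathbb{M}_k|-|\mathbb{M}_j|-1)\geq 2$, since $|\mathbb{M}_k|\geq b+1$ and $|\mathbb{M}_j|\leq b-1$. When every feasible target reachable from $L_2$ happens to be a slot already loaded with exactly $b$, I would use the same two-step swap as in \thref{l4}: send an ONU of such an exactly-$b$ slot into $L_1$ and backfill the vacated place with an ONU drawn from $L_2$, which leaves the intermediate slot at $b$ while shifting one unit of load from an over-filled to an under-filled slot, again yielding a decrease of at least $2$. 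This progressive enlargement of the reachable under-filled region is carried out exactly as the while-loop of Algorithm 2 that generates $\mathbb{L}$, so it terminates; more generally it amounts to shifting load along a feasible augmenting path from $p$ to a slot of $L_1$.

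The step I expect to be the main obstacle is showing that such a feasible (possibly multi-step) reassignment must exist. I would settle this by contradiction in the manner of \thref{l1}. Form the closure $S_o\subseteq\mathbb{L}$ of $\{p\}$ under feasible reachability: add $j\in\mathbb{L}$ to $S_o$ whenever some ONU $i$ occupying an $S_o$-slot under $x'_{ij}$ satisfies $(i,j)\in\mathcal{A}$. If $S_o$ ever meets $L_1$, the augmenting path just described delivers $x''_{ij}$. Otherwise every slot of $S_o$ carries at least $b$ ONUs, with $p$ carrying at least $b+1$, so the ONUs lying in the slots of $S_o$ number at least $b|S_o|+1$; by \thref{cl7} all of them belong to $\mathbb{N}_L$. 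By construction of $S_o$, every $\mathbb{L}$-slot that is feasible for any of these ONUs already lies in $S_o$, and since $\bar{x}_{ij}$ places all of $\mathbb{N}_L$ inside $\mathbb{L}$, it must in particular place each of these at-least-$b|S_o|+1$ ONUs into a slot of $S_o$. That is impossible under the per-slot bound of $b$ respected by $\bar{x}_{ij}$, because $|S_o|$ such slots can hold at most $b|S_o|$ ONUs. This contradiction shows the reassignment exists and establishes $f_1(x'_{ij})-f_1(x''_{ij})\geq 2$.
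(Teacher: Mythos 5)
Your proof is correct and follows the route the paper itself intends: the paper's own proof of this lemma is a two-line deferral to ``the generation of set $\mathbb{L}$'' and to Lemma~4, and what you have written is precisely that mirrored argument carried out in full --- the capacity bound $|\mathbb{N}_L|<b|\mathbb{L}|$ coming from the seed slots of $\mathbb{L}$, the pigeonhole step producing an under-filled slot in $\mathbb{L}$, the (possibly multi-step) augmenting move, and the closure/contradiction against $\bar{x}_{ij}$ via Claim~7. No gaps; your version is in fact substantially more complete than the one printed in the paper.
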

\begin{proof}
	From the generation of set $\mathbb{L}$ it is very easy to note that we can always find $x''_{ij}$ where $|\mathbb{M}_p|$ reduces by one and another slot say $q$ exists (assigned with less than $b$ number of ONU) which is assigned with one extra ONU while keeping the number of assigned ONUs the same for all other slots. Clearly, in this case, $f_1(x'_{ij})-f_1(x''_{ij})\geq 2$ (refer \thref{l4}) which proves \thref{l5}. 
\end{proof}    
               
\bibliographystyle{IEEEtran}

\ifCLASSOPTIONcaptionsoff
  \newpage
\fi
\end{document}